%
\documentclass[11pt]{article}
\usepackage{booktabs} 
\usepackage[ruled,vlined,linesnumbered,nofillcomment]{algorithm2e}

\SetAlFnt{\small}
\SetAlCapFnt{\small}
\SetAlCapNameFnt{\small}
\SetAlCapHSkip{0pt}
\IncMargin{-\parindent}


\usepackage{fullpage}
\usepackage{amsmath, amsthm, amssymb}
\usepackage{verbatim}
\usepackage[sort]{natbib}
\usepackage{xcolor}
\usepackage{graphicx}
\usepackage{hyperref}

\newtheorem{claim}{Claim}
\newtheorem{lemma}{Lemma}
\newtheorem{theorem}{Theorem}

\theoremstyle{definition}

\newtheorem{example}{Example}

\newcommand{\bE}{\mathbb{E}}
\newcommand{\bR}{\mathbb{R}}

\newcommand{\bN}{\mathbb{N}}
\newcommand{\eps}{\varepsilon}
\newcommand{\cA}{\mathcal{A}}
\newcommand{\cS}{\mathcal{S}}
\newcommand{\cH}{\mathcal{H}}
\newcommand{\trans}{P}
\newcommand{\si}{{s_\mathrm{init}}}
\newcommand{\st}{{s_\mathrm{term}}}
\newcommand{\ra}{r_A}
\newcommand{\rp}{r_P}
\newcommand{\ua}{u_A}
\newcommand{\up}{u_P}
\newcommand{\pf}{f}
\newcommand{\argmax}{\operatorname*{argmax}}

\newcommand{\dom}{D}
\newcommand{\dfp}{\delta_P}
\newcommand{\dfa}{\delta_A}

\renewcommand{\emptyset}{\varnothing}

\title{Efficient Algorithms for Planning with Participation Constraints}

\author{Hanrui Zhang\thanks{Carnegie Mellon University, \texttt{hanruiz1@cs.cmu.edu}.  Supported by NSF Award IIS-1814056, ARO Grant W911NF2110230, and funding from the Cooperative AI Foundation and the Center for Emerging Risk Research.}
\and
Yu Cheng\thanks{University of Illinois at Chicago, \texttt{yucheng2@uic.edu}.  Supported in part by NSF Award CCF-2122628.}
\and
Vincent Conitzer\thanks{Duke University, \texttt{conitzer@cs.duke.edu}.  Supported by NSF Award IIS-1814056, ARO Grant W911NF2110230, and funding from the Cooperative AI Foundation and the Center for Emerging Risk Research.}
}

\date{}

\begin{document}

\maketitle

\begin{abstract}
We consider the problem of {\em planning with participation constraints} introduced in~\citep{zhang2022planning}. In this problem, a principal chooses actions in a Markov decision process, resulting in separate utilities for the principal and the agent.  However, the agent can and will choose to end the process whenever his expected onward utility becomes negative.  The principal seeks to compute and commit to a policy that maximizes her expected utility, under the constraint that the agent should always want to continue participating. We provide the first polynomial-time exact algorithm for this problem for finite-horizon settings, where previously only an additive $\varepsilon$-approximation algorithm was known. Our approach can also be extended to the (discounted) infinite-horizon case, for which we give an algorithm that runs in time polynomial in the size of the input and $\log(1/\varepsilon)$, and returns a policy that is optimal up to an additive error of $\varepsilon$.
\end{abstract}

\section{Introduction}

{\em How do we keep users from leaving?}
That is the question asked daily by service providers such as banks, phone carriers, cable networks, and internet streaming companies.
Much of the depth of the question originates from its dynamic nature: services last over (normally an extensive period of) time, users can leave at almost any moment, and the cost and benefit of leaving vary depending on the situation.
Consider cable networks: when a new user signs their first contract, the network typically offers a discounted rate for 6 or 12 months, and if the user switches to another network during that time, there will be an early termination fee.
However, after the first several months, when the user has become attached to the network, the monthly rate increases to the normal amount.
Similar strategies (free trials, sign-up bonuses, etc.) are used by almost all service providers, especially those conducting business over the internet, where it is easier for users to leave a provider and switch to another.
While there are certainly many other considerations behind such strategies, arguably the main objective is to keep users around while generating as much revenue as possible.

Even in the simple example of cable networks, the dynamic nature of the problem already introduces some delicate tradeoffs: a low (or $0$) early termination fee would make it harder to keep the user in the first several months, but would also encourage the user to start the service in the first place (equivalently, prevent the user from leaving at the very beginning); similarly, a higher normal rate (which means higher revenue) may still be acceptable once the user becomes sufficiently attached, but anticipating this eventual higher normal rate at the outset, a new user may not sign the contract in the first place, or may leave before becoming attached to the network.
In other words, the network's policy in a ``state'' not only affects whether the user would leave in that state, but also affects the user's decision in all ``previous'' states.
Moreover, although the network and the user have misaligned interests, they are by no means in a zero-sum situation: the network may spend extra effort on improving the quality of the service, which would cost the network, but benefit the user even more, so they have less incentive to leave --- the question is, is that worthwhile?

The presence of these issues suggests that designing a business strategy should be viewed as a {\em planning} problem, where the network is the planner (or the {\em principal}), and the user is the {\em agent}.
The principal decides what {\em action} to take in each possible situation (i.e., each {\em state}).
The action gives the principal and the agent possibly different rewards, and brings the state to a possibly random new state, where another action will be taken.
The agent does not have a voice in which actions to take in which states, but always has the option to {\em leave}, which is the rational move to make when the agent's expected onward utility is below $0$ (where without loss of generality, $0$ is the utility induced by the best outside option, taking into consideration the cost of leaving).
The goal of the principal is to design a policy that maximizes the principal's utility {\em subject to participation constraints}, which require that the agent's expected onward utility in every possible state should be at least $0$.%
\footnote{Of course, sometimes it is more desirable for the principal to simply let the agent leave.  Still, technically, the assumption that the principal never allows the agent's expected onward utility to be negative is without loss of generality.
This is because we can extend the MDP with an ``end'' action from each state (where it is possible for the agent to have negative onward utility), which deterministically leads to 
an additional, absorbing state corresponding to the process having ended.  From this state, no additional rewards will be obtained by either party.  With these extensions, a policy that would result in the agent actually leaving the MDP at some point is equivalent to the policy that is the same except for, at that point, taking the ``end'' action within the MDP, ensuring zero onward utility.}
Such participation constraints introduce a mechanism design flavor to the problem, which distinguishes it from the classical problem of planning in Markov Decision Processes (MDPs).
In fact, the latter can be viewed as a special case of the former, where conceptually, the agent does not have the ability to leave. We can bring the classical case into the formalism here by making sure the agent's reward is always nonnegative, so the agent would never want to leave.

In this paper, we study the problem of planning with participation constraints from a computational point of view.
Our goal is to answer the following question:
\begin{quote}
    \em Given all parameters of a dynamic environment (i.e., reward functions and transition probabilities), can we efficiently compute a policy that maximizes the principal's utility subject to participation constraints?
\end{quote}

\subsection{Equivalent Variants}

For further motivation, we now present some variants that result in the same technical problem, so that our techniques apply to them as well.  The reader who is satisfied to keep the above motivation in mind can safely skip this subsection, as the remainder of the paper is written in line with the above motivation.

Equivalently, we can also consider problems where the goal is not to prevent the agent from {\em leaving}, but rather the goal is to prevent the agent from {\em entering}.  Such examples are reminiscent of problems in the security games literature \citep{kar2017trends,sinha2018stackelberg}.
For example, suppose we wish to discourage young people from joining a gang.  We consider a representative agent (young person) and assume that once he joins the gang, he can no longer leave it.  We can plan various, generally costly, enforcement measures that reduce the expected onward utility of being in the gang.  
(Note that we cannot {\em condition} these actions on whether the agent has joined the gang, as we are generally unable to observe gang membership; we have to commit to taking the actions even if we believe the agent was successfully deterred from joining the gang.)
In this case, the goal is to ensure that this expected onward utility always stays {\em nonpositive}, so the agent will not join the gang.  Simply negating the agent's rewards thus brings us back to the problem considered before.

In this example, our actions do not affect the agent's utility {\em before} the agent enters (joins the gang), whereas in the original problem we introduced, our actions do not affect the agent's utility {\em after} the agent leaves.
A natural generalization is that our actions may affect the agent's utility both before and after the agent has left or entered.  In the previous example, the most effective way to prevent the agent from joining the gang may not be to reduce the utility of being in the gang through enforcement, but rather to increase the utility of {\em not} being in the gang, for example by investing in after-school programs.  Or, perhaps a combination of both is optimal. In this case, there is no longer a sharp distinction between entering and leaving --- entering the gang is equivalent to leaving the alternative activities.  What matters is the {\em difference} in reward between having entered/left and not yet having done so.
If we normalize the rewards to the agent so that leaving results in onward rewards of $0$, we arrive back at the problem considered before.\footnote{Indeed, previously, when we assumed staying out of the gang gave rewards of $0$, and we then negated the rewards of being in the gang, this corresponded exactly to this normalization step: the negated rewards of being in the gang are the normalized rewards of staying out of the gang in that case.}
Thus, in the remainder of this paper, we focus on the original problem where leaving results in onward utility zero, in the understanding that this problem captures the full generality of such problems where rewards may be received both before and after leaving/entering.

\subsection{Our Results}

Our main result is an affirmative answer to our main question: {\em there is a polynomial-time algorithm that computes an optimal policy subject to participation constraints} (see Theorem~\ref{thm:main}).
Simple as it may appear, we find the existence of such an algorithm highly counterintuitive.
In classical MDPs, it is well known that optimal policies are without loss of generality deterministic and history-independent.
Given this, an optimal policy can be found by a simple backward induction procedure.
Unfortunately, this is no longer true in the presence of participation constraints.
In fact, as we show in Section~\ref{sec:difficulties}, restricting the policy to be either deterministic or history-independent may lead to an enormous loss in the principal's utility.
In other words, to solve our problem, we need to optimize over randomized and history-dependent policies.
Such optimization problems are often extremely hard (i.e., $\mathsf{APX}$-hard or $\mathsf{PSPACE}$-hard), which is the case for, e.g., partially observable MDPs and various special cases thereof \citep{papadimitriou1987complexity,mundhenk2000complexity}.
Another concrete example is that computing an optimal dynamic mechanism (a problem closely related to ours, which can be viewed as our setting with additional incentive-compatibility constraints) is $\mathsf{APX}$-hard~\citep{zhang2021automated}.
In fact, to the best of our knowledge, no other planning problem of a similar level of generality (i.e., generalizing planning in classical MDPs) where history-dependence is required admits efficient exact algorithms --- this phenomenon is famously known as the {\em curse of history} \citep{pineau2006anytime,silver2010monte,ye2017despot}.
Moreover, the fact that optimal policies may be history-dependent also rules out the possibility of computing the flat representation of an optimal policy efficiently, since the size of such a representation is already exponential in the number of states. (Our algorithm computes a succinct and implicit representation that encodes an optimal policy.)
Given all the above, at least we were surprised that an efficient algorithm exists for planning with participation constraints.

Technically, our algorithm operates over the concept of Pareto frontier curves (formally defined in Section~\ref{sec:curves}).
Roughly speaking, the Pareto frontier curve associated with a state specifies, for each given onward utility that we may wish to guarantee the agent, 
the maximum onward utility for the principal that is achievable by a policy that satisfies: (1) it gives the agent exactly the desired onward utility and (2) it satisfies all future participation constraints.
If we were able to somehow compute the Pareto frontier curves in all states, then it would be possible (although probably still nontrivial) to construct an optimal policy given these curves, or at least find the principal's optimal utility subject to participation constraints.
However, although these curves are piecewise linear, in general they have exponentially many pieces, which makes computing them explicitly impossible.
Our algorithm instead only tries to {\em evaluate} these curves in specific ways.
In particular, we make two types of evaluations: evaluations at specific points, and evaluations along specific directions.
While none of these evaluations can be done in a straightforward way (because we cannot compute the curves), we show that they can be recursively reduced to each other, through binary searching over the direction of an evaluation.
Then, by scheduling all recursive evaluations in the right order, the algorithm is able to perform all essential evaluations using only polynomial computation, {\em given that the binary searches only require polynomially many iterations}.
Bounding the number of iterations then requires a careful analysis of the numerical precision of the algorithm and the numerical ``resolution'' of the Pareto frontier curves, which turns out to work exactly in the way we want.
As a result, we obtain a weakly polynomial-time algorithm (similar to all currently known polynomial-time algorithms for linear programming), whose time complexity depends on the number of bits required to encode the input numbers.
A more detailed overview is given in Section~\ref{sec:overview}.

The algorithm discussed above is for finite-horizon (episodic) environments, but it is not too hard to adapt it into an algorithm for infinite-horizon discounted environments.
As a byproduct of our main result, we also give an algorithm that computes a policy that is additively suboptimal by at most $\eps$ for any $\eps > 0$ in infinite-horizon discounted environments, which runs in time polynomial in $\log(1 / \eps)$ and the size of the input.
This is discussed in Section~\ref{sec:extensions}, together with other remarks and extensions of the finite-horizon algorithm.

\subsection{Related Work}

Most closely related to our results is the recent work by \citet{zhang2022planning}.
They provide two algorithms for planning with participation constraints in finite-horizon environments: an approximation algorithm and an exact algorithm.
Their approximation algorithm computes a policy that can be additively suboptimal by at most $\eps$ for any $\eps > 0$, in time polynomial in $1 / \eps$, as well as the size of the problem.
Note that this guarantee is not only weaker compared to that provided by our exact algorithm, but also weaker than the one provided by our algorithm for infinite-horizon discounted environments, whose time complexity is polynomial in $\log(1 / \eps)$ rather than $1 / \eps$.
Their exact algorithm, which computes the Pareto frontier curves in all states, takes exponential time in the worst case.
Our polynomial-time exact algorithm closes the main question left open by \citet{zhang2022planning}, i.e., whether the problem of planning with participation constraints is in $\mathsf{P}$.

From an economic perspective, the problem of planning with participation constraints can be viewed as dynamic mechanism design (see, e.g., \citep{athey2013efficient,bergemann2010dynamic,bergemann2019dynamic,pavan2017dynamic,pavan2014dynamic}) under individual rationality constraints {\em only}.
The key difference is that in dynamic mechanism design, the agent has private information that may affect the reward of both the agent and the principal.
So, in addition to satisfying participation constraints (i.e., individual rationality constraints), normally the principal's policy also needs to be incentive compatible, so the agent is encouraged to report their private information truthfully.
From a computational point of view, the fact that the agent does not have private information enables polynomial-time algorithms for computing an optimal policy, which is known to be hard with incentive-compatibility constraints \citep{papadimitriou2016complexity,zhang2021automated}.
A conceptually related problem is that of ``moving the goalposts'' \citep{ely2020moving}, in which an agent works on a task of uncertain difficulty, modeled as the duration of required effort, and the principal knows the task difficulty and provides information over time, with the goal being to encourage the agent to finish the task.
This problem can be viewed as a structured special case of planning with participation constraints, where the state consists of the agent's belief of the difficulty of the task, as well as the fraction of the task that is already finished.
Another related problem is dynamic evaluation design \citep{smolin2021dynamic}, in which the principal evaluates an agent who is learning their own ability, with the goal being to persuade the agent that they are of high ability so that they will keep working.
Again, this problem can be viewed as a special case of ours, where the state is the agent's belief of their own ability.
These results are not comparable to ours, since they focus on {\em characterizing} optimal policies in {\em structured} environments, whereas our goal is to {\em compute} optimal policies in {\em general} environments.

The problem of planning with participation constraints is related to a number of planning problems in different variants and generalizations of MDPs.
In constrained MDPs (CMDPs) \citep{altman1999constrained}, the planner aims to find an optimal policy subject to an {\em overall} constraint, such as that the expected cumulative ``cost'' must be at most some certain amount.
It is known that in CMDPs, optimal policies are without loss of generality history-independent, and can be found by linear programming \citep{altman1996constrained,altman1998constrained,altman1995linear}.
Another related model is multi-objective MDPs (MOMDPs) \citep{roijers2013survey}.
Similar to CMDPs, MOMDPs focus on the overall cumulative reward vector, whereas in our problem, participation constraints have to be satisfied throughout the process.
In multi-agent (partially observable) MDPs \citep{gmytrasiewicz2005framework,hoang2013interactive,oliehoek2012decentralized}, multiple agents act individually in a common environment, based only on local information and beliefs about each other.
One key difference between our problem and multi-agent MDPs is that we consider an asymmetric environment where the principal has the exclusive power to choose a policy, and the agent can only choose to participate or not.

\section{Preliminaries}

We first formally introduce the problem setup, discuss why the problem is challenging, and introduce the notion of Pareto frontier curves which will be instrumental in the algorithm and the analysis thereof.

\subsection{Problem Setup}

\paragraph{The environment.}
We mostly focus on finite-horizon environments in this paper.
There are $n$ states $\cS = [n] = \{1, \ldots, n\}$ and $m$ actions $\cA$.\footnote{
    We assume all actions are available in every non-terminal state.
 This is without loss of generality because if an action $a$ is not available in a state $s$, we can set $a$'s rewards and transition probabilities to be the same as any available action in $s$.
}
For each state $s \in \cS$ and action $a \in \cA$, let $\rp(s, a)$ and $\ra(s, a)$ be the rewards of the principal and the agent respectively when action $a$ is played in state $s$.
Moreover, let $\trans(s, a) \in \bR^n$ be the transition probabilities when action $a$ is played in state $s$, where $\trans(s, a, s')$ is the probability that the next state is $s' \in \cS$.

Without loss of generality, we assume that the states are ordered by reachability. Formally, for any $s, s' \in \cS$ where $s \ge s'$, we have $\trans(s, a, s') = 0$ for all $a \in \cA$.\footnote{
    This is without loss of generality for finite-horizon (episodic) environments because one can make a copy of each state for each time step.
    Then, copies of states at earlier times can only transition into copies at later times.}
We assume $\si = 1$ is the initial state, and $\st = n$ is the terminal state where no action is available.

\paragraph{Histories and policies.}
A history of length $t \in \bN$ is a tuple $(s_1, a_1, \dots, s_t, a_t)$. 
Let $\cH_t$ be the set of all histories of lengths $t$ for each $t \in \bN$.
In particular, $\cH_0 = \{\emptyset\}$, where $\emptyset$ denotes the empty history.
Let $\cH = \bigcup_{t \in \bN} \cH_t$.
For history $h = (s_1, a_1, \dots, s_t, a_t) \in \cH$ and state-action pair $(s, a) \in \cS \times \cA$, we write $h + (s, a)$ for the history obtained by appending $(s, a)$ to the end of $h$:
\[
    h + (s, a) = (s_1, a_1, \dots, s_t, a_t, s, a).
\]
Define $(s, a) + h$ similarly.
For two history-state pairs $(h, s)$ and $(h', s')$ where $h = (s_1, a_1, \dots, s_t, a_t)$ and $h' = (s'_1, a'_1, \dots, s'_{t'}, a'_{t'})$, we say $(h', s')$ extends $(h, s)$, or $(h', s') \supseteq (h, s)$, if $t' > t$, and $(s_1, a_1, \dots, s_t, a_t, s)$ is a prefix of $(s'_1, a'_1, \dots, s'_{t'}, a'_{t'}, s')$.

Let $\Delta(\cA)$ denote the probability simplex over $\cA$.
A policy $\pi: \cH \times \cS \to \Delta(\cA)$ maps a history $h \in \cH$ and a state $s \in \cS$ to a random action $a \in \cA$, where $\pi(h, s, a)$ is the probability that $\pi$ plays action $a$ at history-state pair $(h, s)$.
Let $\Pi$ be the set of all (randomized, history-dependent) policies, which may or may not satisfy participation constraints (defined below).

\paragraph{Utility and participation constraints.}
Under a policy $\pi$, the expected onward utility $\up^\pi(h, s)$ of the principal at history-state pair $(h, s)$ can be defined in the following recursive way.
\begin{equation}
\label{eqn:u-p-pi}
    \up^\pi(h, s) = \begin{cases}
        0 & \text{if } s = \st, \\ 
        \bE_{a \sim \pi(h, s), s' \sim \trans(s, a)}[\rp(s, a) + \up^\pi(h + (s, a), s')] & \text{otherwise.}
    \end{cases}
\end{equation}
The onward utility of the agent $\ra^\pi(h, s)$ can be defined similarly, with $\up$ and $\rp$ replaced by $\ua$ and $\ra$ respectively.
We say a policy is feasible if it satisfies participation constraints in all states.
Throughout the paper, we assume that there exists a feasible policy (e.g., the policy that maximizes the agent's utility).
Our goal is to find a feasible policy that maximizes the principal's overall utility.
Formally, we want to compute a policy $\pi$ that maximizes $\up^\pi(\emptyset, \si)$, subject to the participation constraints that $\ua^\pi(h, s) \ge 0$ for all $(h, s) \in \cH \times \cS$.~\footnote{
    Note that some history-state pairs may not be reachable with positive probability under a policy.
    For consistency, we enforce participation constraints for such pairs as well.
    This is without loss of generality, since if $(h, s)$ is not reachable, then the policy from this point onward does not affect the principal's utility, so we can run the policy that maximizes the agent's utility to satisfy participation constraints.
}

\paragraph{Encoding the input.}
In order to properly formulate the computational problem, we assume that all parameters of the problem (including $n$, $m$, $\rp(s, a)$, $\ra(s, a)$, and $\trans(s, a, s')$) are given in binary representations.
Moreover, we assume that $-1 \le \rp(s, a) \le 1$ and $-1 \le \ra(s, a) \le 1$ for all $s$ and $a$, and each of the input numbers has at most $L$ bits.

\subsection{Some Natural Approaches and Why They Fail}
\label{sec:difficulties}

Before diving into our algorithm, we first discuss some natural approaches and why they do not work.
In classical MDPs, it is well known that optimal policies are without loss of generality deterministic and history-independent.
Given this, an optimal policy can be found by a simple backward induction procedure.
Unfortunately, this is no longer true in the presence of participation constraints.
As we illustrate in the following examples, restricting the policy to be either deterministic or history-independent may lead to a significant loss in the principal's utility.

\begin{figure}
\centering
\includegraphics[width=0.45\linewidth]{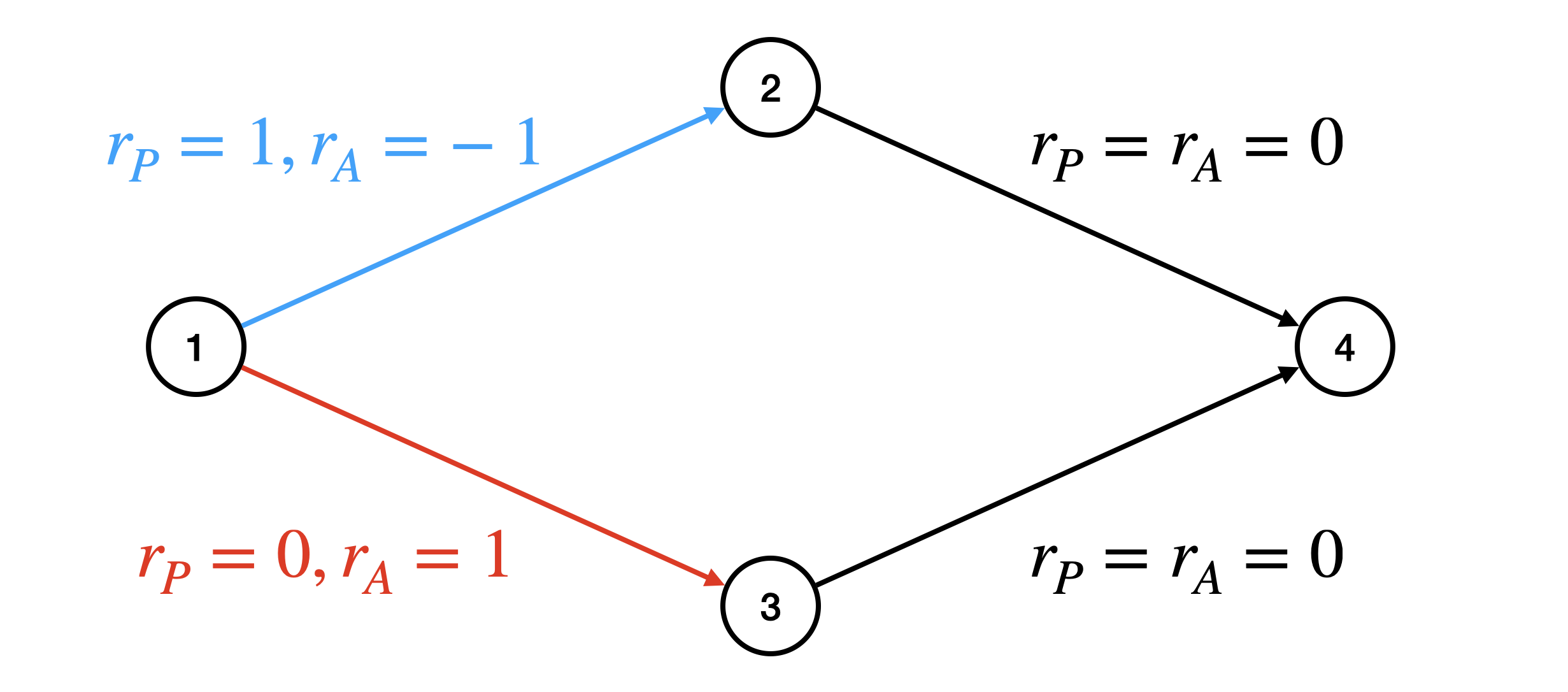}
\includegraphics[width=0.54\linewidth]{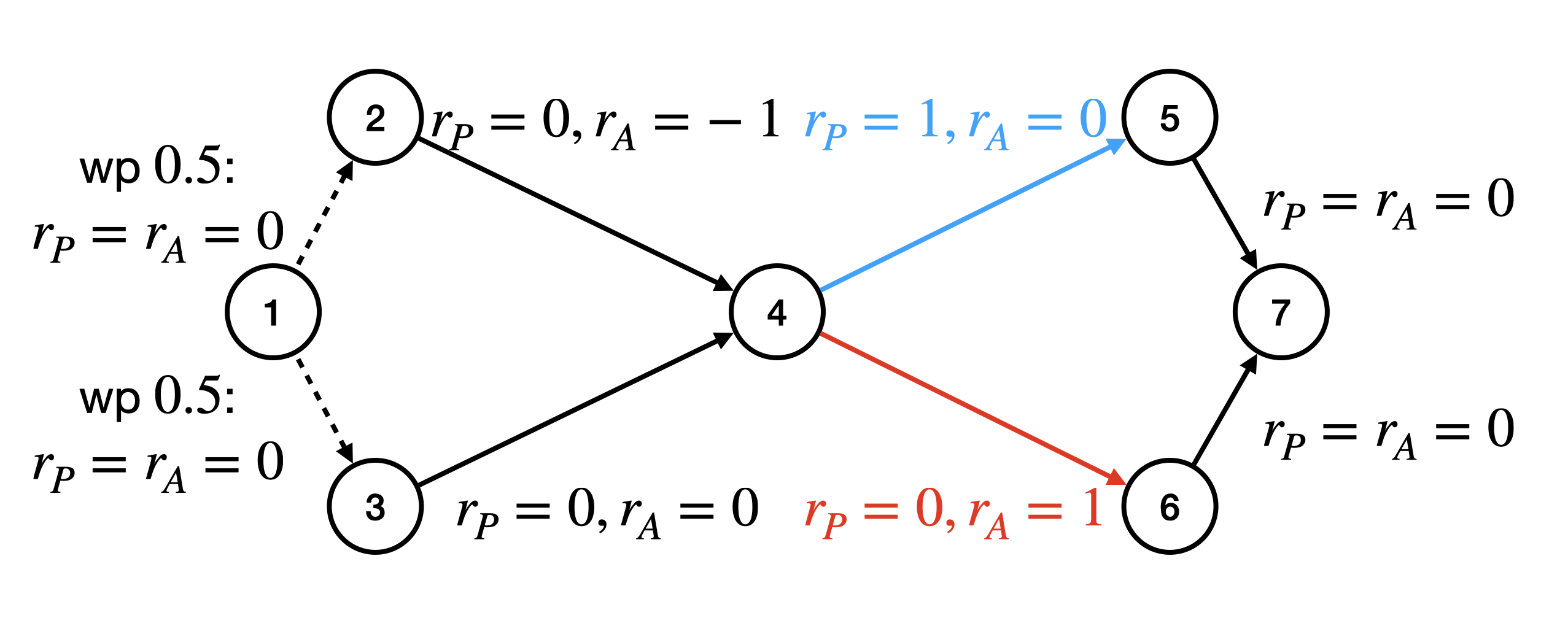}
\caption{Examples where deterministic/history-independent policies are far from optimal.}
\label{fig:examples}
\end{figure}

\begin{example}
    Consider the left environment in Figure~\ref{fig:examples}.
    This environment has $n = 4$ states, where $\si = 1$ and $\st = 4$.
    All states have at most $1$ available action except for state $1$.
    In state $1$ there are two actions available, the upper (blue) one and the lower (red) one, leading to state $2$ and state $3$ respectively.
    The optimal (randomized) policy is to play the upper action and the lower action each with probability $1/2$ in state $1$, which gives the principal overall utility $1/2$, and the agent onward utility $0$ in all states.
    However, restricted to deterministic policies, the only feasible policy is to play the lower action in state $1$, which gives the principal overall utility $0$.
\end{example}

\begin{example}
    Consider the right environment in Figure~\ref{fig:examples}.
    This environment has $n = 7$ states, where $\si = 1$ and $\st = 7$.
    All states have at most $1$ available action except for state $4$.
    In state $4$, there are two actions available, the upper (blue) one and the lower (red) one, leading to states $5$ and $6$ respectively.
    Moreover, in state $1$, the only available action randomly transits to state $2$ or $3$ with equal probability.
    The optimal (history-dependent) policy is to play the upper action in state $4$ if the previous state is state $3$, and play the lower action if the previous state is state $2$, which gives the principal overall utility $1/2$, and the agent nonnegative onward utility in all states.
    However, restricted to history-independent policies, the only feasible policy is to play the lower action in state $4$, which gives the principal overall utility $0$.
    In particular, note that in state $4$ we cannot play one of the two actions uniformly at random, because then the agent's onward utility in state $2$ would be $-1/2$.
\end{example}

Optimizing over history-dependent policies is often computationally intractable~\citep{pineau2006anytime,silver2010monte,ye2017despot}.
For instance, the problem of finding optimal policies for partially observable MDPs (as well as various special cases thereof~\citep{papadimitriou1987complexity,mundhenk2000complexity}) is PSPACE-hard.
Another concrete example is that computing an optimal dynamic mechanism (which can be viewed as our setting with additional incentive-compatibility constraints) is $\mathsf{APX}$-hard~\citep{zhang2021automated}.
Another difficulty that arises from history-dependence is that we cannot efficiently describe an optimal policy in the flat representation, since the optimal policy may need to specify which action to take in each of exponentially many histories.

We conclude this section by showing that it is computationally hard to find an optimal deterministic policy.
While the best deterministic policy could perform worse than the optimal randomized policy, there are situations where one may want to focus on deterministic policies.
More importantly, this further illustrates the complexity of our problem.
We reduce from the $0$-$1$ knapsack problem.

\begin{claim}
    It is $\mathsf{NP}$-hard to find an optimal deterministic policy that satisfies participation constraints.
\end{claim}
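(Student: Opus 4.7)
The plan is to reduce from the $0$-$1$ knapsack problem, which is NP-hard. Given an instance with items $(v_i, w_i)$ for $i = 1, \ldots, N$ and capacity $W$, we first preprocess (discarding items with $w_i > W$ and scaling all weights by $W$ and all values by $\max_i v_i$) so that $v_i, w_i, W \in [0, 1]$ and the instance is unchanged up to a common scaling. Construct an MDP on states $\cS = \{1, 2, \ldots, N+1\}$ arranged in a line, with $\si = 1$ and $\st = N+1$. At each item state $s \in \{1, \ldots, N-1\}$ there are two deterministic actions, both transitioning to $s+1$: ``take,'' with $\rp(s, \mathrm{take}) = v_s$ and $\ra(s, \mathrm{take}) = -w_s$; and ``skip,'' with both rewards $0$. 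At $s = N$, both actions' agent rewards are shifted up by $W$: ``take'' gives $(\rp, \ra) = (v_N, W - w_N)$ and ``skip'' gives $(0, W)$. All rewards lie in $[-1, 1]$, and the input has polynomial bit length.

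A deterministic policy here is specified by the set $I \subseteq [N]$ of states at which it plays ``take.'' A direct calculation, walking back from $\st$, gives the agent's onward utility at state $s$ as
\[
    \ua^\pi(h, s) \;=\; W \;-\; \sum_{i \in I,\, i \ge s} w_i,
\]
where $h$ is the unique history reaching $s$. Hence the participation constraint at $s$ reads $\sum_{i \in I,\, i \ge s} w_i \le W$. It is most restrictive at $s = 1$, where it is precisely the knapsack capacity constraint $\sum_{i \in I} w_i \le W$; for $s > 1$ the sum is over fewer items and is automatically implied. The principal's utility equals $\sum_{i \in I} v_i$. Thus a feasible deterministic policy with principal utility at least $V$ exists if and only if the knapsack instance admits a solution of value at least $V$, establishing NP-hardness.

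There is no serious obstacle; the only design choice that requires care is where to deposit the agent's ``budget'' $W$. If $W$ were paid at the initial state, then from any later state the budget would already be gone while the remaining costs $-w_i$ persisted, so the participation constraints at later states would forbid taking any items and the reduction would collapse. Placing the entire $W$ into the last action (regardless of whether item $N$ is taken) aligns the accounting so that, at every state $s$, the agent's outstanding budget is exactly $W$ against the still-unincurred costs $\{w_i : i \ge s, i \in I\}$, making the participation constraint tightest at $s = 1$ and encoding precisely the knapsack constraint there.
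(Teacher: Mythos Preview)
Your reduction is correct. It differs from the paper's construction, so a brief comparison is in order.

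The paper also reduces from $0$-$1$ knapsack, but uses a \emph{fan-out} gadget rather than a chain: the initial state has a single action that transitions uniformly at random to one of $k$ item states, and each item state has two actions (take / don't take) that both lead directly to the terminal state. The agent's reward in each item state is $S$ or $S - s_i$, and the initial action gives the agent $\frac{1-k}{k}\cdot S$; this calibration makes the participation constraint at $\si$ exactly $\sum_{i\in T^\pi} s_i \le S$, while the item-state constraints are trivially satisfied since $s_i \le S$.

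Your chain construction accomplishes the same thing with deterministic transitions. The key design point you correctly identify---depositing the entire budget $W$ at the \emph{last} item state rather than the first---is what makes the onward utility at state $s$ equal $W - \sum_{i\in I,\, i\ge s} w_i$, so that the constraint binds only at $s=1$. The paper's fan-out avoids this issue structurally: because every item state is one step from the terminal, there is no ``downstream'' accounting to worry about, and the budget can simply be split between the initial action and each item action. Your version is arguably more elementary (no randomness in the transitions, fewer states), while the paper's version sidesteps the need to reason about where along a path the budget must be injected. Both reductions are polynomial and both establish the claim.
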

\begin{proof}
    Consider a knapsack instance with $k$ items and size limit $S$, where item $i$ has size $s_i$ and value $v_i$. The goal of the knapsack problem is to pick a subset of items with maximum total value, subject to the constraint that their total size does not exceed $S$.
    Without loss of generality, assume $s_i \le S$ for each $i \in [k]$.
    We construct an environment with $n = k+2$ states that encodes the knapsack instance, where $\si = 1$, $\st = n$, and state $i+1$ corresponds to item $i$.
    
    There is a single action $a_0$ available in state $\si = 1$ with $\rp(\si, a_0) = 0$, $\ra(\si, a_0) = \frac{1 - k}{k} \cdot S$, and $\trans(\si, a_0, i + 1) = \frac{1}{k}$ for each $i \in [k]$.
    For each item $i$, there are two actions $a_{i, 0}$, $a_{i, 1}$ available in the corresponding state $i + 1$.
    Intuitively, $a_{i, 0}$ corresponds to not taking item $i$, where $\rp(i + 1, a_{i, 0}) = 0$, $\ra(i + 1, a_{i, 0}) = S$; and $a_{i, 1}$ corresponds to taking item $i$, where $\rp(i + 1, a_{i, 1}) = v_i$, $\ra(i + 1, a_{i, 1}) = S - s_i$.
    Both actions lead to the terminal state deterministically, i.e., $\trans(i + 1, a_{i, 0}, \st) = \trans(i + 1, a_{i, 1}, \st) = 1$.
    
    We show that this encodes the knapsack instance.
    Due to the structure of the environment we construct, all policies are without loss of generality history-independent, so we omit the dependence on $h$.
    For a deterministic policy $\pi$, let $T^\pi \subseteq [k]$ be the set of items that $\pi$ decides to pick, that is, $i \in T^\pi$ iff $\pi(s, a_{i, 1}) = 1$.
    Then we have
    \begin{itemize}
        \item $\displaystyle
            \rp^\pi(\si) = \frac{1}{k} \sum_{i \in T^\pi} v_i$. \\
        \item For each $i \in [k]$, we always have $\ra^\pi(i + 1) \ge 0$.
        \item $\displaystyle
            \ra^\pi(\si) \ge 0 \iff \sum_{i \in T^\pi} s_i \le S$.
    \end{itemize}
    It follows immediately that an optimal deterministic policy subject to participation constraints corresponds to an optimal solution to the knapsack instance.
\end{proof}

\subsection{Pareto Frontier Curves}
\label{sec:curves}

Now we define the notion of Pareto frontier curves, which is instrumental in designing and analyzing our algorithm.
Intuitively, these curves capture the Pareto optimal tradeoffs between the principal's and the agent's (onward) utilities at different states.

We associate a Pareto frontier curve with each state $s \in \cS$.
For state $s$, we consider all policies starting at $s$ (as if $s$ is the initial state) and the onward utilities of the principal and the agent $\ua^\pi$ and $\up^\pi$ as defined in Equation~\eqref{eqn:u-p-pi}.
We say a policy $\pi$ is {\em feasible in the future} iff $\pi$ satisfies the participation constraints at all later history-state pairs.

Let $\dom_s = [\ua^-(s), \ua^+(s)]$ be the range of onward utility of the agent that is achievable by policies that are feasible in the future.
Formally,
\begin{align*}
    \ua^-(s) & = \min\{\ua^\pi(\emptyset, s) \mid \pi \in \Pi: \ua^\pi(h', s') \ge 0,\, \forall (h', s') \supseteq (\emptyset, s)\}, \\
    \ua^+(s) & = \max\{\ua^\pi(\emptyset, s) \mid \pi \in \Pi: \ua^\pi(h', s') \ge 0,\, \forall (h', s') \supseteq (\emptyset, s)\}.
\end{align*}
Note that we consider policies that satisfy participation constraints {\em after leaving} state $s$, and put no restrictions on the agent's onward utility {\em in} state $s$.

The Pareto frontier curve $\pf_s: \dom_s \to \bR$ in state $s \in \cS$ maps the agent's onward utility $x \in \dom_s$ to the maximum principal's onward utility $y$ that is achievable by some feasible-in-the-future policy $\pi$, such that the agent's onward utility is exactly $x$ under $\pi$.
Formally, for each $s \in \cS$ and $x \in \dom_s$,
\begin{equation}
\label{eqn:pareto-curve}
    \pf_s(x) = \max\{\up^\pi(\emptyset, s) \mid \pi \in \Pi: \ua^\pi(\emptyset, s) = x \text{ and } \ua^\pi(h', s') \ge 0,\, \forall (h', s') \supseteq (\emptyset, s)\}.
\end{equation}

The following property of Pareto frontier curves, which was observed in \citep{zhang2022planning}, plays an important role in our algorithm and analysis.

\begin{lemma}
\label{lem:concave}
    For each $s \in \cS$, the Pareto frontier curve $\pf_s$ defined in Equation~\eqref{eqn:pareto-curve} is concave on $\dom_s$.
\end{lemma}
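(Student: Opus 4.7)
The plan is to prove the stronger statement that the set
\[
    C_s = \{(\ua^\pi(\emptyset, s),\, \up^\pi(\emptyset, s)) \mid \pi \in \Pi,\ \ua^\pi(h', s') \ge 0 \ \forall (h', s') \supseteq (\emptyset, s)\} \subseteq \bR^2
\]
of achievable (agent, principal) onward-utility pairs at $s$, ranging over policies feasible in the future from $s$, is convex. Concavity of $\pf_s$ then follows immediately: given $x_1, x_2 \in \dom_s$ and $\lambda \in [0,1]$, pick witness policies $\pi_1, \pi_2$ attaining $(x_i, \pf_s(x_i)) \in C_s$; convexity of $C_s$ places the combination $(\lambda x_1 + (1-\lambda) x_2,\, \lambda \pf_s(x_1) + (1-\lambda) \pf_s(x_2))$ in $C_s$, hence $\pf_s(\lambda x_1 + (1-\lambda) x_2) \ge \lambda \pf_s(x_1) + (1-\lambda) \pf_s(x_2)$.

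To prove $C_s$ is convex, I take $\pi_1, \pi_2 \in \Pi$ feasible in the future from $s$ and build a mixed policy $\pi_\lambda \in \Pi$ that behaves as if a single coin flip at the outset selects branch $b \in \{1, 2\}$ with probabilities $(\lambda, 1-\lambda)$ and then follows $\pi_b$ thereafter. Since $\Pi$ only allows conditioning on the observable history, I would realize this meta-strategy by the standard posterior trick: letting $P_b(h)$ denote the probability that $\pi_b$ produces history $h$ starting at $s$, set $\mu(h) = \lambda P_1(h) / (\lambda P_1(h) + (1-\lambda) P_2(h))$ on histories of positive total weight (and define $\mu$ arbitrarily elsewhere), and set $\pi_\lambda(h, s', a) = \mu(h)\, \pi_1(h, s', a) + (1 - \mu(h))\, \pi_2(h, s', a)$ for every history $h$, state $s'$, and action $a$. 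A short induction on history length shows that the distribution over trajectories produced by $\pi_\lambda$ starting at $s$ equals $\lambda$ times the trajectory distribution of $\pi_1$ plus $(1-\lambda)$ times that of $\pi_2$, which immediately yields the two required utility identities at $(\emptyset, s)$.

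It remains to verify the participation constraints. For any $(h, s') \supseteq (\emptyset, s)$ reachable under $\pi_\lambda$, the same mixing identity applied to the subtree rooted at $(h, s')$ gives $\ua^{\pi_\lambda}(h, s') = \mu(h)\, \ua^{\pi_1}(h, s') + (1-\mu(h))\, \ua^{\pi_2}(h, s') \ge 0$, since both $\pi_1, \pi_2$ are feasible in the future from $s$ by assumption. For unreachable $(h, s')$, the footnote in Section~2.1 allows us to replace $\pi_\lambda$ on such subtrees by the agent-utility-maximizing subpolicy without altering any on-path quantity. The main obstacle is essentially bookkeeping: one must verify carefully that the posterior mixture is a well-defined element of $\Pi$ and that the two induction claims (linearity of the trajectory distribution, and the matching identity for onward utilities at reachable history-state pairs) go through cleanly. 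Nothing deeper is needed; at heart the argument is just that randomizing between two feasible-in-the-future policies at time zero produces another feasible-in-the-future policy whose utilities interpolate linearly.
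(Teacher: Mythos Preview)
Your proposal is correct and is precisely the approach the paper takes: the paper's entire ``proof'' is the single sentence that ``randomizing between feasible-in-the-future policies always results in a feasible-in-the-future policy,'' and what you have written is a careful unpacking of exactly that sentence (using the standard posterior-mixing implementation so the mixture is realizable as a history-dependent policy in $\Pi$). Your treatment is more detailed than the paper's, but the underlying idea is identical.
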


The concavity of these curves is a direct consequence of the fact that randomizing between feasible-in-the-future policies always results in a feasible-in-the-future policy.

\section{Our Algorithm and Analysis}

Our main result is a polynomial-time exact algorithm for the problem of planning with participation constraints.

\begin{theorem}
\label{thm:main}
    There is an algorithm that runs in time $\mathrm{poly}(n, m, L)$ and computes an optimal policy satisfying participation constraints, where $n$, $m$, and $L$ are the number of states, number of actions, and number bits required to encode each input number (rewards and transition probabilities) respectively.
\end{theorem}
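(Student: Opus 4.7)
The plan is to compute, for each state $s \in \cS$ in reverse topological order, an \emph{implicit} representation of the Pareto frontier curve $\pf_s$ that supports two kinds of queries in $\mathrm{poly}(n,m,L)$ time: a \emph{point query} that, given $x \in \dom_s$, returns $\pf_s(x)$ together with a witnessing feasible-in-the-future policy, and a \emph{direction query} that, given $(\alpha,\beta) \in \bR^2$, returns an extreme point $(x^\star, y^\star)$ of the hypograph of $\pf_s$ maximizing $\alpha x + \beta y$, again with a witnessing policy. Since $\pf_s$ may have exponentially many breakpoints, we never write it down explicitly — we only access it through these oracles and their recursive reductions.

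To answer a \emph{direction query} at $s$, observe that by concavity (Lemma~\ref{lem:concave}) and Equation~\eqref{eqn:pareto-curve}, $\pf_s$ is the upper concave envelope over actions $a \in \cA$ (and mixtures thereof) of the function obtained by choosing, for each successor $s'$ of $s$, a point $(x_{s'}, y_{s'})$ in the hypograph of $\pf_{s'}$ with $x_{s'} \ge 0$, and forming
\[
  \bigl(\ra(s,a) + \textstyle\sum_{s'}\trans(s,a,s')\,x_{s'},\ \rp(s,a) + \textstyle\sum_{s'}\trans(s,a,s')\,y_{s'}\bigr).
\]
Maximizing $\alpha x + \beta y$ over this set \emph{decouples} across successors: for each action $a$, the maximizer is obtained by making a direction query on each $\pf_{s'}$ in the \emph{same} direction $(\alpha,\beta)$ (with the constraint $x_{s'} \ge 0$ handled by clipping or by a second query in direction $(1,0)$ when the unconstrained optimum is negative). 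Taking the best action yields the extreme point and its witnessing policy. To answer a \emph{point query} at $x$, we binary-search over a slope parameter $\lambda$: a direction query with $(\alpha,\beta) = (\lambda,-1)$ returns the extreme point of $\pf_s$ where the supergradient crosses $\lambda$. Bisecting on $\lambda$ locates two adjacent breakpoints straddling $x$, after which $\pf_s(x)$ is obtained by linear interpolation and the witnessing policy is the corresponding mixture of the two returned policies.

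The main obstacle is bounding the depth of the binary search. I would prove by induction on the topological order of $\cS$ that every breakpoint $(x,y)$ of every $\pf_s$ — as well as the slope of every linear piece — is a rational number whose numerator and denominator have $\mathrm{poly}(n,m,L)$ bits, even though the number of breakpoints may be exponential. The reason is that each breakpoint is determined by a polynomial-size linear system (action mixture weights plus a polynomial number of ``tight'' participation and supporting-hyperplane constraints), so by Cramer's rule its bit-length is polynomially bounded. Consequently, any two distinct slopes of $\pf_s$ differ by at least $2^{-\mathrm{poly}(n,m,L)}$, and the binary search over $\lambda$ terminates in $\mathrm{poly}(n,m,L)$ rounds with enough precision to identify the correct breakpoint \emph{exactly}. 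Scheduling the recursion — so that each state $s$ is invoked only polynomially many times in total, with direction queries for the same $(\alpha,\beta)$ shared across callers rather than recomputed — and maintaining exact rational arithmetic of polynomial bit length then yields the overall $\mathrm{poly}(n,m,L)$ running time.

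Finally, the optimal policy for the original problem is obtained by computing $\max_{x \ge 0} \pf_{\si}(x)$: make a direction query to $\pf_{\si}$ in direction $(0,1)$ and, if its extreme point has $x < 0$, make a point query at $x = 0$. The oracle simultaneously returns a succinct implicit description of the optimal policy as the mixture of actions and recursive continuation-queries that realize it, completing the proof of Theorem~\ref{thm:main}.
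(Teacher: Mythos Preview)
Your approach is essentially the paper's: two query types on the Pareto curves, direction queries decoupling across successors in the same direction, point queries via binary search over directions, and a precision bound to control the binary-search depth. Two details need fixing. First, when the unconstrained direction-query answer at a successor $s'$ has $x_{s'} < 0$, the constrained optimum (over $x_{s'} \ge 0$) is at $x_{s'} = 0$, which requires a \emph{point query at $0$}, not a direction query in $(1,0)$ (the latter returns the rightmost point of $\dom_{s'}$, not $(0,\pf_{s'}(0))$). This matters for scheduling: the paper's key observation is that the \emph{only} point queries ever needed are at $x=0$, one per state, precomputed in an outer reverse-topological pass; then each direction query needs only an $O(n)$ inner pass using those cached values. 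Second, maximizing $(\lambda,-1)\cdot(x,y)$ over the hypograph is unbounded; the direction for supergradient $\lambda$ should have positive $y$-weight, e.g.\ $(-\lambda,1)$ as in the paper.

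The more substantive gap is your precision argument. You invoke Cramer's rule on a ``polynomial-size linear system'' determining each breakpoint, but there is no obvious such system in the original input: the witnessing policies are history-dependent, and the natural LP has exponentially many variables. The paper instead proves an \emph{inductive} denominator bound (its Lemma on precision): the $x$-coordinate of any breakpoint of $\pf_s$ is $\ra(s,a) + \sum_{s'} \trans(s,a,s')\, x_{s'}$ where each $x_{s'}$ is either $0$ or a breakpoint of $\pf_{s'}$, so a common denominator grows by at most $2^L$ per state, giving $2^{nL}$ overall; the $y$-coordinates additionally involve the interpolated values $\pf_{s'}(0)$, whose denominator picks up an extra factor of roughly $2^{2nL}$ per state, giving $2^{3n^2L}$. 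From these explicit bounds one gets a minimum slope gap of $2^{-5n^2L}$ between adjacent pieces, which is what actually pins down the binary-search depth. Your Cramer-style intuition is not wrong in spirit, but it needs to be routed through this inductive structure rather than a single polynomial-size LP.
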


The proof of the theorem is deferred to Section~\ref{sec:proof}, and the next subsection is dedicated to a more friendly presentation of the algorithm and the analysis.

\subsection{Overview of the Algorithm}
\label{sec:overview}

Given the definition of Pareto frontier curves, the maximum overall utility of the principal that can be achieved by a feasible policy is equal to $\max_{x \in \dom_\si \cap \bR_+} \pf_\si(x)$.~\footnote{We use $\bR^+$ to denote the set of nonnegative real numbers.}
So, the problem of planning with participation constraints immediately reduces to computing the Pareto frontier curve at the initial state $\si$ --- which, unfortuantely, turns out to be a highly challenging (if not impossible) task.
In particular, although each $\pf_s$ is piecewise linear, there may be exponentially many pieces in each curve, which makes explicitly computing the curves infeasible.
In \cite{zhang2022planning}, the authors circumvent this issue by allowing approximation --- they give an approximation algorithm (which achieves an additive $\eps$-approximation in $\mathrm{poly}(1 / \eps)$ time) for planning with participation constraints by recursively computing approximations of the Pareto frontier curves, from later states to earlier ones.
Their main technical contribution is identifying a computationally feasible recursive relation between the curves, and coming up with a way to approximate the curves using only a small number of pieces.
However, it seems unlikely that similar approaches could lead to an efficient {\em exact} algorithm.

\begin{figure}
\centering
\includegraphics[width=0.48\linewidth]{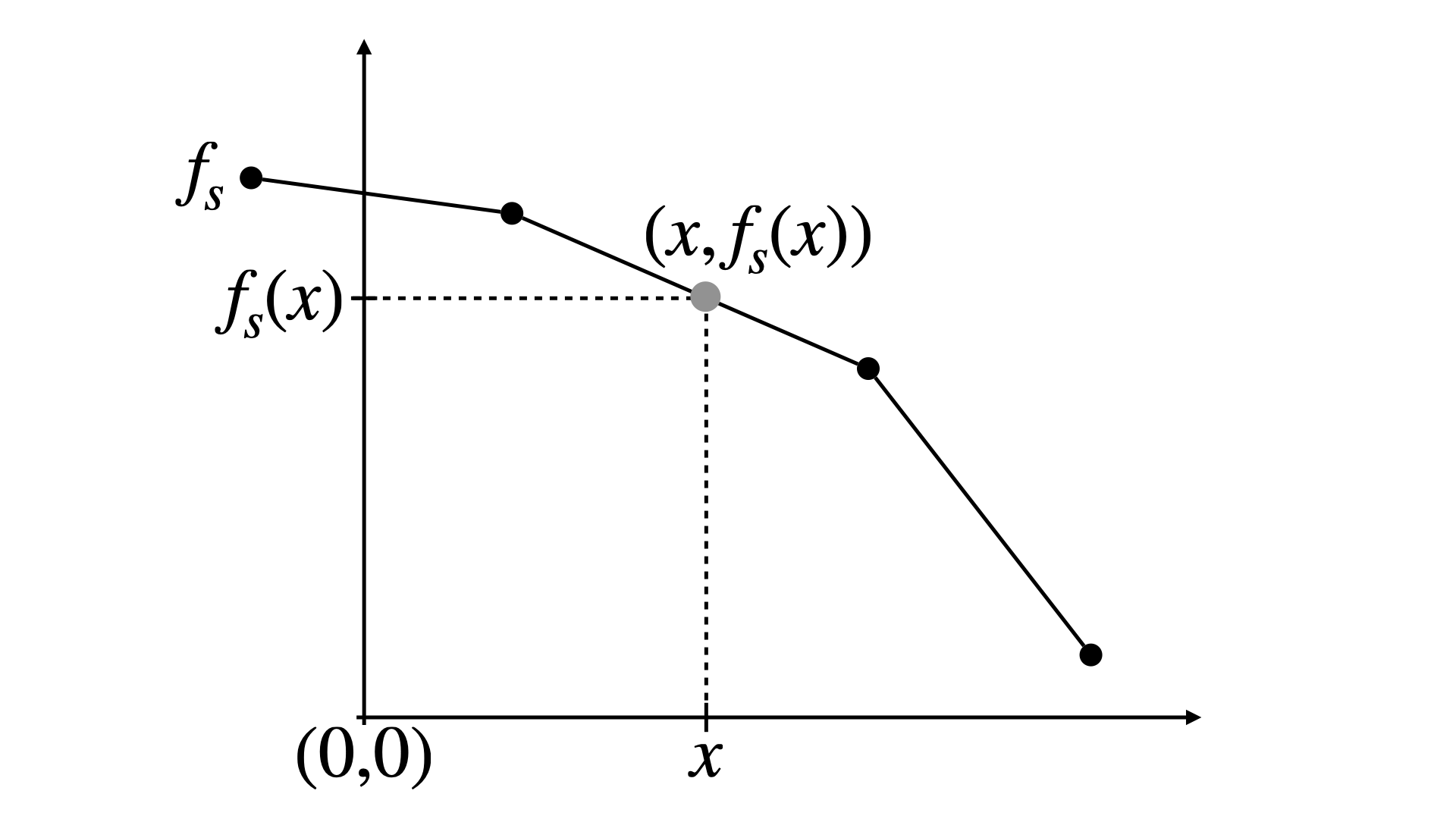}
\includegraphics[width=0.48\linewidth]{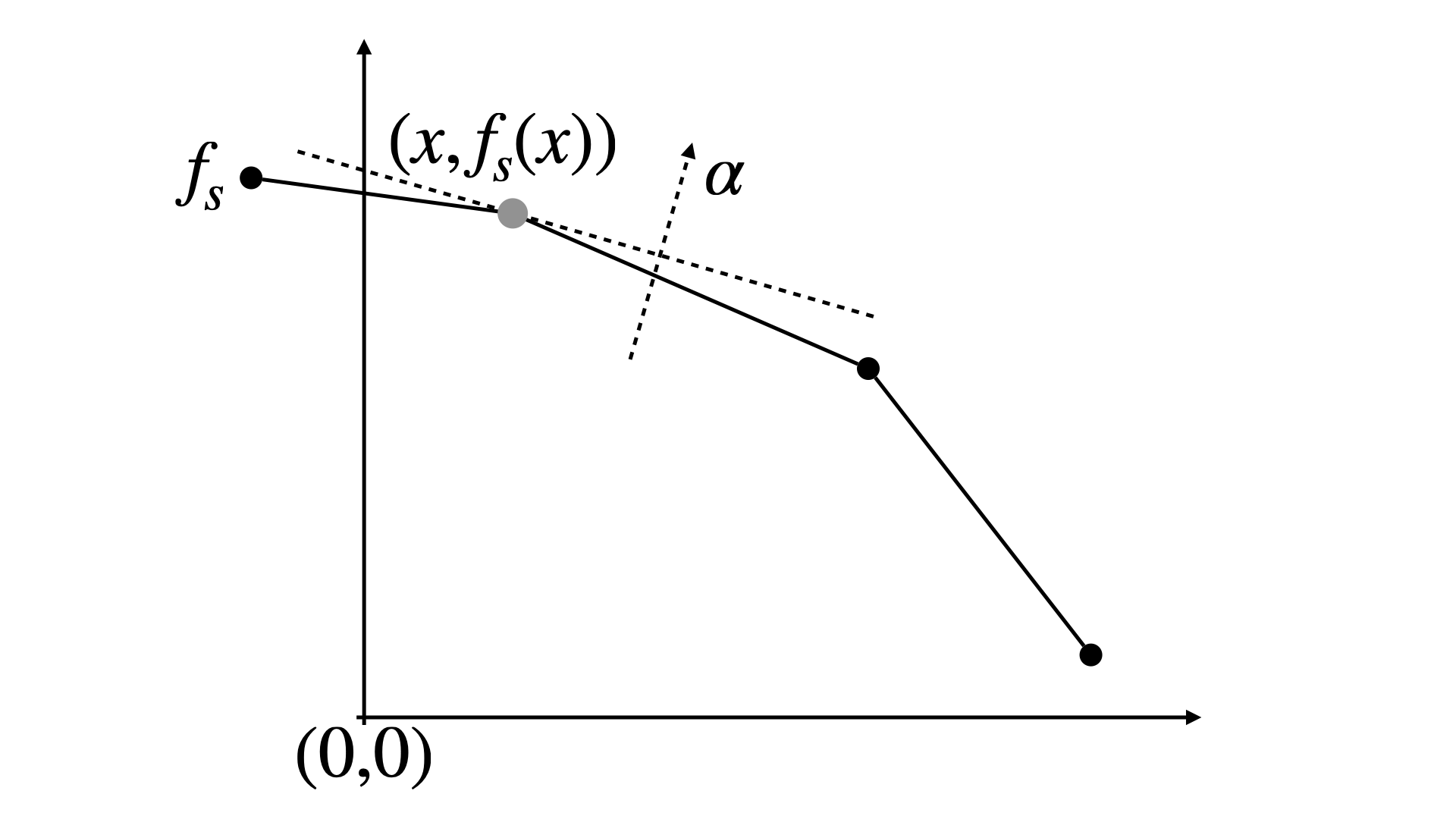}
\caption{The two types of evaluation subroutines of our algorithm.}
\label{fig:evaluation}
\end{figure}

In contrast to their approach, our algorithm does not try to compute (or approximate) the entire Pareto frontier curves.
Instead, we only evaluate the curves ``at specific points'' and ``along specific directions'' (see Figure~\ref{fig:evaluation}).
The left side of Figure~\ref{fig:evaluation} illustrates evaluating $\pf_s$ at a given point, where we want to compute $\pf_s(x)$ for a given $x$.
The right side of Figure~\ref{fig:evaluation} shows an evaluation along a specific direction $\alpha \in \bR^2$, which returns a point $(x, \pf_s(x))$ that maximizes the inner product $\alpha \cdot (x, \pf_s(x))$.
These two types of evaluations correspond to the two major {\em conceptual} subroutines of our algorithm.

If these subroutines can be implemented efficiently, then we can immediately compute the maximum overall utility of the principal: it is equal to the $y$-coordinate of the point found by evaluating $\pf_\si$ along the direction $(0, 1)$ if the $x$-coordinate of the returned point is nonnegative; otherwise it is equal to $\pf_\si(0)$. 
This is true because $\pf_\si$ is concave: if a point with the largest $y$-coordinate on $\pf_\si$ is to the left of $x = 0$, then the optimal feasible point must have $x$-coordinate $0$.

Based on these observations, we only need to efficiently implement these two subroutines.
Below we discuss how this is possible.
We will refrain from being fully formal, and focus on the intuition instead.
For the full description of the algorithm, see Algorithm~\ref{alg:main}.

\paragraph{Evaluations at specific points.}
Suppose we want to evaluate $\pf_s$ at $x$.
We show that this can be reduced to multiple evaluations along specific directions of $\pf_s$.
Consider the left side of Figure~\ref{fig:subroutines}.
To find the (gray) point $(x, \pf_s(x))$, we only need to find the two endpoints of the piece containing it, namely the (blue) point $(x_1, \pf_s(x_1))$ and the (green) point $(x_2, \pf_s(x_2))$.
Then, taking the convex combination of these two endpoints with the right coefficients gives us $(x, \pf_s(x))$.
These coefficients can be computed using $x$ (given as input), $x_1$ and $x_2$, as illustrated in Figure~\ref{fig:subroutines}.
So it suffices to find the two endpoints.

Consider, for example, the left endpoint $(x_1, \pf_s(x_1))$.
There exists some direction $\alpha$ (e.g., $\alpha_1$ in the figure) such that
\[
    \alpha \cdot (x_1, \pf_s(x_1)) = \max_{x \in \dom_s} \alpha \cdot (x, \pf_s(x)).
\]
We only need to find such an $\alpha$ and evaluate $\pf_s$ along that direction.
To this end, observe that the maximizer found by evaluating along $\alpha$ moves on the curve monotonically as we rotate $\alpha$ (consider, from the left to the right, $\alpha_3, \alpha_1, \alpha_2$ and the corresponding maximizers, which are the red, blue, and green points respectively).
Again this is because the curve is concave.
So, we need to find the ``rightmost'' $\alpha$ such that the maximizer found by evaluating along $\alpha$ is to the left of $x$, i.e., the $x$-coordinate of that maximizer is no larger than $x$.

To achieve this, we perform a binary search over $\alpha$.
We defer the discussion on the numerical issues of this binary search to Section~\ref{sec:numerical}.
For now, we assume the number of iterations this binary search requires is $\mathrm{poly}(n, m, L)$, which is in fact the case, as we will show later.

\paragraph{Evaluations along specific directions.}
Now consider the other subroutine where we want to evaluate $\pf_s$ along a given direction $\alpha$.
We show that this can be reduced to multiple evaluations of both types, of the Pareto frontier curves in later states.
At a high level, evaluating $\pf_s$ along $\alpha$ can be viewed as a planning problem, where the goal is to find a policy $\pi$ that maximizes $\alpha \cdot (\up^\pi(\emptyset, s), \ua^\pi(\emptyset, s))$, subject to participation constraints {\em in the future} (and not in state $s$).

Since the policy is unconstrained in state $s$, without loss of generality, an optimal policy $\pi$ has the Markovian property {\em in state $s$ only}: consider the behavior of the policy right after taking action $a$ in $s$, leaving $s$, and entering a later state $s' > s$.
The subpolicy from this point on must maximize $\alpha \cdot (\up^\pi((s, a), s'), \ua^\pi((s, a), s'))$ subject to participation constraints (including in state $s'$).
In particular, the subpolicy at $s'$ does not depend on the action $a$ taken in state $s$ or the subpolicy in other later states.\footnote{
    Note that the subpolicy in state $s'$ is in effect only if $s'$ is the first state reached after leaving $s$.
    In the case where we reach some $s''$ immediately after leaving $s$ and then later reach $s'$, it is the subpolicy in $s''$ that should apply.
}
This subpolicy corresponds to a point on $\pf_{s'}$, which can be found by evaluating $\pf_{s'}$ twice: along direction $\alpha$ and at $x = 0$ respectively, and then picking the point with the larger $x$-coordinate (again because $\pf_{s'}$ is concave).
In other words, the planning subproblem in each state $s' > s$ can be reduced to two evaluations of $\pf_{s'}$.
After solving these subproblems for each $s' > s$, the policy in state $s$ should choose an action $a$ which maximizes
\[
     \alpha \cdot \left(\rp(s, a), \ra(s, a)\right) + \bE_{s' \sim \trans(s, a)}\left[\alpha \cdot \left(\up^\pi((s, a), s'), \ua^\pi((s, a), s')\right)\right].
\]
The above procedure is illustrated in the right side of Figure~\ref{fig:subroutines}, where the action $a$ is a maximizer of the above expectation.
The two cases inside the expectation in the figure correspond to the two cases of the subproblem in each later state $s'$.
The upper case is when evaluating $\pf_{s'}$ along $\alpha$ returns the (gray) point $(x', \pf_{s'}(x'))$ with $x' \ge 0$, so it corresponds to the subpolicy at $s'$.
The lower case is when evaluating $\pf_{s'}$ along $\alpha$ gives a point with a negative $x$-coordinate, in which case the (gray) point $(0, \pf_{s'}(0))$ corresponds to the subpolicy in state $s'$.

\begin{figure}
\centering
\includegraphics[width=0.49\linewidth]{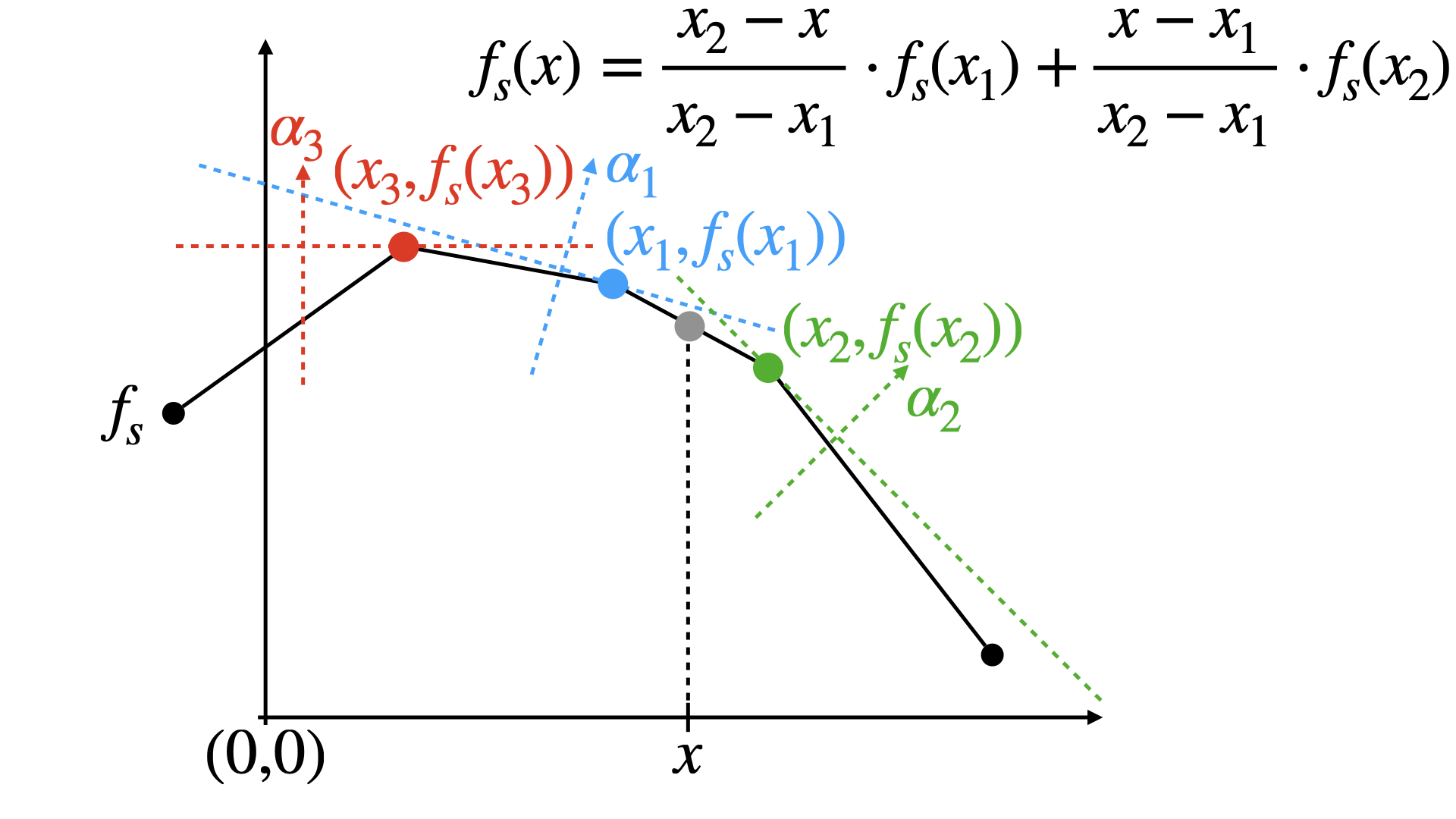}
\includegraphics[width=0.49\linewidth]{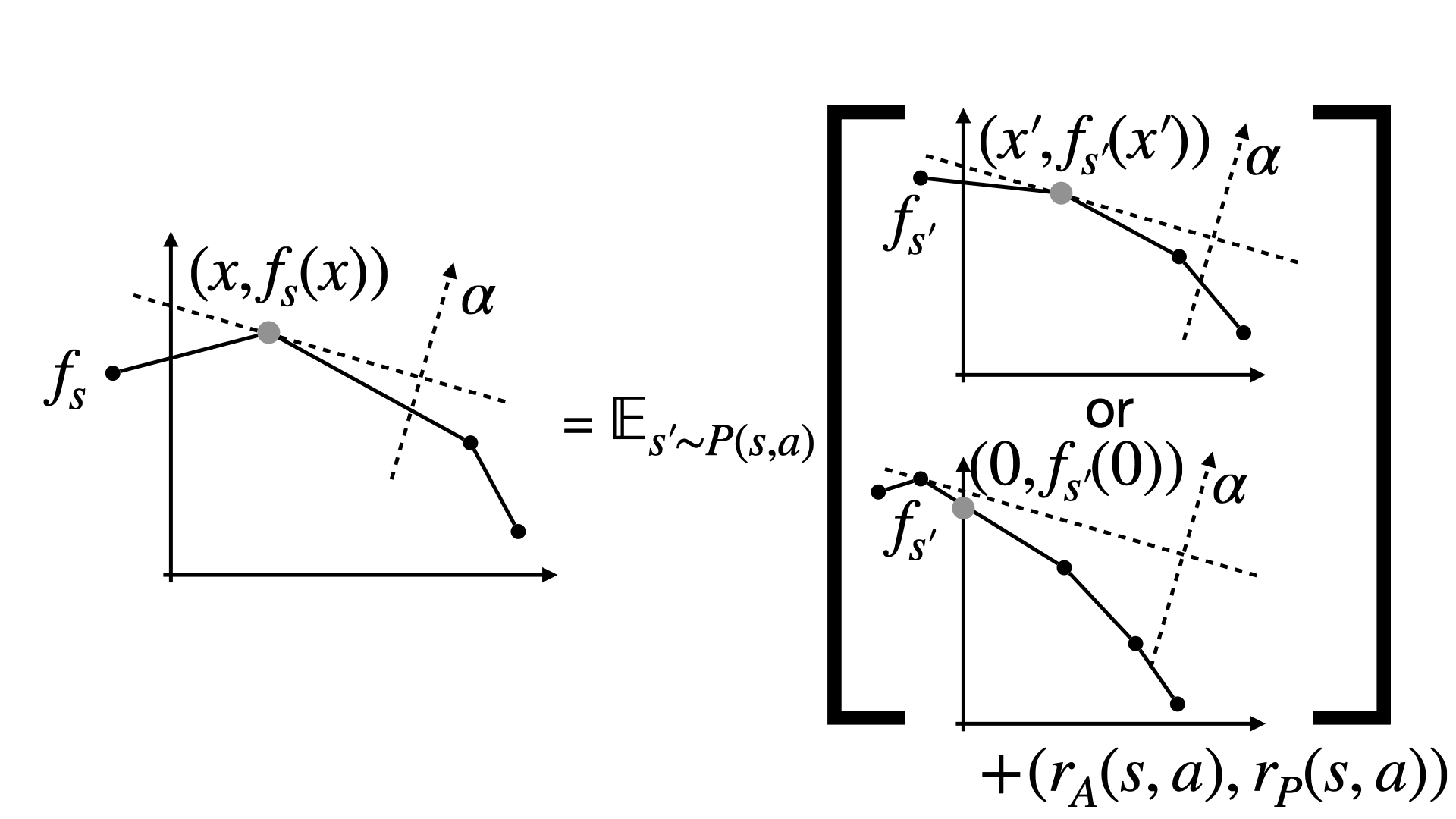}
\caption{Recursive (na\"ive) implementations of the two subroutines.}
\label{fig:subroutines}
\end{figure}

\paragraph{Putting everything together.}
The above discussion already describes a way to perform both types of evaluations in {\em finite} time.
This is because evaluations at specific points reduce to only evaluations along specific directions in the same state; and evaluations along specific directions reduce to only evaluations in later states (one evaluation of each type for each later state).
However, a problem is that it generally takes exponential time if we recursively perform the evaluations for subproblems in the na\"ive way, because
for both types of evaluations there can be polynomially many subproblems (i.e., the number of iterations in the binary search for evaluations at specific points, and the number of later states for evaluations along specific directions).

So, to evaluate $\pf_\si$ at $x = 0$ and along $(0, 1)$ efficiently, we need to schedule and handle all the evaluations involved (most of which originate from recursive calls) in a more global manner.
Intuitively, for any state $s \in \cS$, we only ever need to calculate $\pf_s(0)$ and evaluate $\pf_s$ along a polynomial number of directions $\alpha$.
This is because each direction $\alpha$ can be traced back to one of $n$ states that first queries for this $\alpha$; at the same time, each state only queries a polynomial number of different $\alpha$'s.
Below we give an informal hierarchical description of the schedule, together with inline annotations.
\begin{itemize}
    \item First observe that evaluations at $x = 0$ appear repeatedly in the na\"ive implementation.
    We therefore center our schedule around these evaluations.
    \item We will compute $\pf_s(0)$ for all states $s \in \cS$ one by one from later states to earlier ones (i.e., from $\st = n$ to $\si = 1$), since the recursive dependence (as discussed above) never goes backwards.
    We call this the {\bf outer loop}.
    \begin{itemize}
        \item Consider some state $s$ in the outer loop, and suppose we have already computed $\pf_{s'}(0)$ for all $s' > s$.
        As discussed above, to compute $\pf_s(0)$, it suffices to perform a {\bf binary search} on $\alpha$ in state $s$.
        \begin{itemize}
            \item In each iteration of the binary search, we need to perform an evaluation of $\pf_s$ along $\alpha$.
            As discussed above, we only need to evaluate $\pf_{s'}$ along $\alpha$ for each $s' > s$, since we already know $\pf_{s'}(0)$.
            This can be done in a single backward pass (from $\st = n$ to $s + 1$) without nested recursive calls, which we call the {\bf inner loop}.
            \begin{itemize}
                \item For each $s' > s$ in the inner loop, the evaluation of $\pf_{s'}$ along $\alpha$ reduces to the evaluation of $\pf_{s''}$ along $\alpha$ and $\pf_{s''}(0)$ for all $s'' > s'$.
                \item The former has already been computed in previous iterations of the inner loop, and the latter has already been computed in previous iterations of the outer loop.
                We only need to retrieve the two points for $s'' = s'+1, \ldots, n$, which means every iteration of the inner loop takes $O(n)$ time.
            \end{itemize}
            \item The inner loop has $O(n)$ iterations, so the total time is $O(n^2)$, which is also the runtime of one iteration of binary search.
        \end{itemize}
        \item Now as discussed before, the binary search has $\mathrm{poly}(n, m, L)$ iterations (we will come back to this not-yet-substantiated claim momentarily), so the total time is $\mathrm{poly}(n, m, L)$.
    \end{itemize}
    \item The outer loop has $O(n)$ iterations, so the total time of computing $\pf_s(0)$ for all $s \in \cS$ is $\mathrm{poly}(n, m, L)$.
    \item Finally, we need one last evaluation of $\pf_\si$ along the direction $(0, 1)$.
    This can be done by a single call to the inner loop above, which takes time $O(n^2)$.
\end{itemize}
A formal description of our algorithm is given in Algorithm~\ref{alg:main}.

\begin{algorithm}[!ht]
\KwIn{state space $\cS = [n]$, action space $\cA$, reward functions $\rp$ and $\ra$, and transition probabilities $\trans$.}
\KwOut{an implicit representation of a principal-optimal policy subject to participation constraints.}
    \tcc{the outer loop}
    \For{$s = n, n - 1, \dots, 1$}{
        \tcc{binary search for the endpoints of the piece containing $(0, \pf_s(0))$}
        let $\ell \gets 0$, $r \gets 2^{3nL}$\;
        \While{$r - \ell \ge 2^{-5n^2L}$}{
            let $\alpha \gets ((r + \ell) / 2, 1) \in \bR^2$ \;
            \tcc{the inner loop}
            \For{$s' = n, n - 1, \dots, s$}{
                let
                    $a_{s', \alpha} \gets \argmax_{a \in \cA} \alpha \cdot \left((\ra(s', a), \rp(s', a)) + \bE_{s'' \sim \trans(s', a)}[(x_{s'', \alpha}, y_{s'', \alpha})]\right)$
                \;
                let
                    $(x_{s', \alpha}, y_{s', \alpha}) \gets (\ra(s', a_{s', \alpha}), \rp(s', a_{s', \alpha})) + \bE_{s'' \sim \trans(s', a_{s', \alpha})}[(x_{s'', \alpha}, y_{s'', \alpha})]$
                \;
                \tcc{replace $(x_{s', \alpha}, y_{s', \alpha})$ with $(0, \pf_{s'}(0))$ if $x_{s', \alpha} < 0$; this is possible only for $s' > s$, where $\pf_{s'}(0) = y_{s'}$ has already been computed}
                \If{$x_{s', \alpha} < 0$ and $s' > s$}{
                    let $(x_{s', \alpha}, y_{s', \alpha}) \gets (0, y_{s'})$\;
                }
            }
            let $\ell \gets \alpha$ if $x_{s, \alpha} \le 0$, and $r \gets \alpha$ otherwise\;
        }
        let $\alpha_{s, -} \gets (\ell, 1)$, $\alpha_{s, +} \gets (r, 1)$\;
        let $(x_{s, -}, y_{s, -}) \gets (x_{s, \alpha_{s, -}}, y_{s, \alpha_{s, -}})$, $(x_{s, +}, y_{s, +}) \gets (x_{s, \alpha_{s, +}}, y_{s, \alpha_{s, +}})$\;
        \tcc{compute $y_s = \pf_s(0)$ as a linear combination of $y_{s, -}$ and $y_{s, +}$}
        let $y_s \gets (x_{s, +} \cdot y_{s, -} - x_{s, -} \cdot y_{s, +}) / (x_{s, +} - x_{s, -})$\;
        \tcc{fix infeasible points reached during the binary search}
        \For{each $\alpha$ tried in the above binary search where $x_{s, \alpha} < 0$}{
            let $(x_{s, \alpha}, y_{s, \alpha}) \gets (0, y_s)$\;
        }
    }
    let $e_y = (0, 1)$\;
    \For{$s = n, n - 1, \dots, 1$}{
        let $a_{s, e_y} \gets \argmax_{a \in \cA} e_y \cdot \left((\ra(s, a), \rp(s, a)) + \bE_{s' \sim \trans(s, a)}[(x_{s', e_y}, y_{s', e_y})]\right)$\;
        let $(x_{s, e_y}, y_{s, e_y}) \gets (\ra(s, a_{s, e_y}), \rp(s, a_{s, e_y})) + \bE_{s' \sim \trans(s, a)}[(x_{s', e_y}, y_{s', e_y})]$\;
        \tcc{this time we do not need to handle $\si = 1$ separately}
        \If{$x_{s, e_y} < 0$}{
            let $(x_{s, e_y}, y_{s, e_y}) \gets (0, y_s)$\;
        }
    }
    \tcc{the principal's optimal reward is $y_{\si, e_y} = y_{1, e_y}$}
    \Return all $\{(x_{s, -}, x_{s, +})\}$, $\{x_{s, \alpha}\}$, $\{(\alpha_{s, -}, \alpha_{s, +})\}$, and $\{a_{s, \alpha}\}$ computed above\;
\caption{A polynomial-time algorithm for computing a principal-optimal policy subject to participation constraints.}
\label{alg:main}
\end{algorithm}

\subsection{Handling Numerical Issues}
\label{sec:numerical}
Finally, we come back to the number of iterations required in the binary search in an evaluation at a specific point (used in the algorithm to compute $\pf_s(0)$ for each state $s$).
We show that under appropriate parametrization of the direction $\alpha$, it suffices to perform the binary search up to some singly exponential precision, which implies the number of iterations is polynomial.
In particular, we search over the slope of the perpendicular direction to $\alpha$.
The intuition is that we only need to distinguish between the slopes of two consecutive pieces on $\pf_s$, which cannot be too close to each other.
In fact, we will establish a stronger claim: the coordinates of all turning points on $\pf_s$ must be integral multiples of some singly-exponentially small quantity.
Since these coordinates are bounded between $-n$ and $n$, the slope of the line between any two turning points (which do not even need to be adjacent) cannot take too many values.
Moreover, the magnitude of the slope is upper bounded by some not too large quantity.
This allows the binary search to terminate in not too many steps.
We elaborate in the following paragraph.

To see why the above is true, we consider a specific procedure of recursively constructing the entire $\pf_s$ for all $s$ from later states to earlier ones, and treat all quantities involved in the construction as fractions.
Fix some $s$, and assume $\pf_{s'}$ for any $s' > s$ has the desired property, i.e., the denominator of any quantity used to represent $\pf_{s'}$ is not too large.
We argue that $\pf_s$ also has this property (where the denominator may be {\em moderately} larger than the denominators in later states --- in fact, it is the blowup that we try to bound).

Recall that any turning point on $\pf_s$ can be found by evaluating along some direction.
So fix a turning point $(x, y)$ on $\pf_s$, and consider any direction $\alpha$ which gives this point.
As discussed earlier (see the right side of Figure~\ref{fig:subroutines}), there exists some action $a \in \cA$ such that
\begin{align*}
    (x, y) & = \bE_{s' \sim \trans(s, a)}[(x_{s'}, y_{s'})] + (\ra(s, a), \rp(s, a)) \\
    & = \sum_{s' > s} \trans(s, a, s') \cdot (x_{s'}, y_{s'}) + (\ra(s, a), \rp(s, a)),
\end{align*}
where $(x_{s'}, y_{s'})$ is either a turning point on $\pf_{s'}$ or $(0, \pf_{s'}(0))$.
Among all quantities on the right hand side of the above equation, $\trans(s, a, s')$, $\ra(s, a)$, and $\rp(s, a)$ have at most $L$ bits in the binary representation, so the denominators of these quantities are at most $2^L$.
Moreover, when $(x_{s'}, y_{s'})$ is a turning point on $\pf_{s'}$, by the induction hypothesis, the denominators of both coordinates are not too large.
So if all $(x_{s'}, y_{s'})$ are turning points, then we immediately know that $(x, y)$ has the desired property: the denominator of both coordinates can only blow up by a factor of $2^L$.
And importantly, in any case, the denominator of $x_s$, or $x_{s'}$ for any $s' > s$, can be at most $2^{nL}$, because the $x$-coordinates of turning points in any state only depend on the $x$-coordinates of turning points in later states.

The problematic case is when $(x_{s'}, y_{s'}) = (0, \pf_{s'}(0))$.
To handle this case, we need to argue that the denominator of $y_{s'}$ is not too large either, compared to those of the turning points on $\pf_{s'}$.
Recall that there exist turning points $(x_\ell, y_\ell)$ and $(x_r, y_r)$ on $\pf_{s'}$, where $x_\ell < 0$ and $x_r > 0$, such that
\[
    y_{s'} = \frac{x_r \cdot y_\ell - x_\ell \cdot y_r}{x_r - x_\ell}.
\]
So the denominator of $y_{s'}$ (compared to that of $y_\ell$ or $y_r$) can blow up by at most a factor of $2^{nL}$ (which is the maximum denominator of $x_\ell$ and $x_r$), times the numerator of $x_r - x_\ell$ (which is upper bounded by $2n \cdot 2^{nL}$ because $-n \le x_\ell < x_r \le n$).
So in any case, assuming $n \ge 2$, the maximum blowup incurred in the construction of $\pf_s$ is $2^{3nL}$, and consequently, the denominator of the $y$-coordinate of any turning point on the curve of any state is $2^{3n^2L}$.
The above discussion on numerical issues is formalized as Lemma~\ref{lem:precision}, which is stated and proved in Section~\ref{sec:proof}.

\subsection{Decoding the Policy}

Algorithm~\ref{alg:main} outputs only an implicit representation of an optimal policy.
In this subsection, we describe how to efficiently decode the output of Algorithm~\ref{alg:main}, so that we can execute the corresponding policy.
The idea is to keep track of the current ``objective direction'', which is initially $(0, 1)$ and may randomly change (and in particular, rotate to the right) as the state evolves.
This objective direction essentially corresponds to how much we need to compensate the agent from this point on in order to satisfy participation constraints.
In other words, the objective direction succinctly encodes the relevant part of the history.

According to Algorithm~\ref{alg:main}, at any time, the onward policy in the current state (given the history) corresponds to either the maximizer along the objective direction, or the intersection of the Pareto frontier curve with $x = 0$. 
In the former case, the agent is satisfied with the current level of compensation, so we do not need to compensate more.
In this case, we can take an action deterministically, and the objective direction does not change.
In the latter case, we need to compensate the agent more to satisfy participation constraints, so we rotate the objective direction to the right.
In this case, we need to randomize between the two actions corresponding to the two endpoints of the piece containing the intersection point.
Depending on which action is actually chosen, the new objective direction is the one for which the corresponding endpoint is the maximizer.
Since all these points and directions (along with many other auxiliary points and directions) have been computed in Algorithm~\ref{alg:main}, we only need to read them from the output.
The full algorithm is given as Algorithm~\ref{alg:decode}, which maps each history-state pair to a (random) action.
Algorithm~\ref{alg:decode} can be easily adapted into a dynamic procedure that plays the optimal policy on the fly while interacting with the environment (by updating $\alpha$), rather than re-analyzing the history from scratch at each point in time.

\begin{algorithm}[ht]
\KwIn{the output of Algorithm~\ref{alg:main} and a history-state pair $(h, s)$ where $h = (s_1, a_1, \dots, s_t, a_t)$.}
\KwOut{a possibly random action corresponding to the optimal policy found by Algorithm~\ref{alg:main}.}

    let $\alpha \gets e_y = (0, 1)$\;
    \tcc{trace the history and compute the current internal state of the policy}
    \For{$i = 1, 2, \dots, t$}{
        \If{$x_{s_i, \alpha} = 0$}{
            if $a_i = a_{s_i, \alpha_{s_i, -}}$ then let $\alpha \gets \alpha_{s_i, -}$; otherwise let $\alpha \gets \alpha_{s_i, +}$\;
        }
    }

    \If{$x_{s, \alpha} = 0$}{
        \tcc{optimal onward policy corresponds to point $(0, \pf_s(0))$, which requires randomization in state $s$}
        \Return $a_{s, \alpha_{s, -}}$ with probability $x_{s, +} / (x_{s, +} - x_{s, -})$, and $a_{s, \alpha_{s, +}}$ with probability $-x_{s, -} / (x_{s, +} - x_{s, -})$\;
    } \Else {
        \tcc{optimal onward policy corresponds to the maximizer along direction $\alpha$}
        \Return $a_{s, \alpha}$\;
    }

\caption{An polynomial-time algorithm for decoding and executing (one step of) the optimal policy found by Algorithm~\ref{alg:main}.}
\label{alg:decode}
\end{algorithm}

Note that the behavior of the policy output by Algorithm~\ref{alg:main} is unspecified for some history-state pairs.
However, if one strictly follows the specified part of the policy, then the unspecified part can never be reached (i.e., the probability that we arrive at such a history-state pair is $0$).
For such unreachable pairs, the behavior of the decoding algorithm can be arbitrary.

\subsection{Remarks and Extensions}
\label{sec:extensions}

\paragraph{Structure of optimal policies.}
Our algorithm also directly implies some structural properties of optimal policies with participation constraints.
In particular:
\begin{itemize}
    \item Although there might be exponentially many turning points on the Pareto frontier curves, for the optimal policy we compute, there are only polynomially many of them for which it is possible that the policy visits them.
    These points are maximizers for the polynomially many directions we queried during the computation of an optimal policy.
    \item Optimal policies are almost deterministic.
    In fact, an optimal policy randomizes between precisely $2$ actions when the participation constraint in the current state (given the history) is binding.
    This is also where the policy branches and history-dependence is introduced.
    In all other situations, the policy deterministically chooses an action.
    This aligns well with intuition: when no participation constraints are binding, it suffices to simply maximize the principal's utility, which naturally leads to a completely deterministic and history-independent policy.
\end{itemize}

\paragraph{Extensions to richer constraints.}
Our algorithm can be generalized to the case where the agent's onward utility in each state must be in one of several disjoint intervals (instead of a single interval $[0, \infty)$).
Moreover, these feasible intervals can be different for each state.
In order to handle multiple feasible intervals in a state $s$, we evaluate $\pf_s$ at the endpoints of all these intervals, which can be done by binary search.
Once we have computed these points, to evaluate a curve along a direction $\alpha$, we only need to handle subproblems where we evaluate later curves {\em restricted to feasible intervals}.
This can be done since if the unconstrained maximizer is infeasible, then the optimal feasible point must be one of the two endpoints that are closest to the unconstrained maximizer.
Since we have already computed all endpoints, we can simply try the two points and choose the better one.

\paragraph{Infinite-horizon environments with discounted reward.}
Now we discuss how to extend our algorithm to the infinite-horizon case with discounted reward.
For such environments, we describe an algorithm that computes a policy subject to participation constraints that is optimal up to an additive error of $\eps > 0$, in time $\mathrm{poly}(n, m, L, \log(1 / \eps))$ for any $\eps > 0$.
This is done by reducing to the finite-horizon case and running Algorithm~\ref{alg:main}.

First we briefly define infinite-horizon environments.
As in the finite-state case, there are $n$ states $\cS = [n]$ and $m$ actions $\cA$, and $\rp$, $\ra$ and $\trans$ denote the principal's reward, the agent's reward, and the transition probabilities respectively.
There is an initial state $\si = 1$, but no terminal state.
We also do not require transitions to be from earlier states to later ones.
In addition, there are discount factors $\dfp \in (0, 1)$ and $\dfa \in (0, 1)$ (which we treat as constants) for the principal and the agent respectively.
Define histories similarly as in finite-horizon environments.
The onward utility $\up^\pi(h, s)$ of the principal under policy $\pi$ in state $s$ given history $h$ is defined recursively such that
\[
    \up^\pi(h, s) = \bE_{a \sim \pi(h, s), s' \sim \trans(s, a)}[\rp(s, a) + \dfp \cdot \up^\pi(h + (s, a), s')].
\]
The onward utility $\ua^\pi$ of the agent is defined similarly, with $\up^\pi$ and $\rp$ replaced with $\ua^\pi$ and $\ra$.
Participation constraints require that for all $(h, s) \in \cH \times \cS$, $\ua^\pi(h, s) \ge 0$.
We say a policy $\pi$ is feasible if it satisfies participation constraints.
The goal is to find a feasible policy that maximizes the principal's overall utility $\up^\pi(\emptyset, \si)$.

Note that the finite-horizon case can be viewed as a special case of the infinite-horizon case with discounted reward, by scaling the rewards appropriately and replacing the terminal state with an absorbing state from which there are no more rewards.
As a result, optimal policies in the infinite-horizon case in general also need to be randomized and history-dependent, so traditional methods are unlikely to work for the problem.
This is true even for approximately optimal policies, as illustrated in the examples in Section~\ref{sec:difficulties}.
Therefore, to handle the infinite-horizon case, it is necessary to incorporate the ideas developed in our algorithm for the finite-horizon case.

Our algorithm consists of two parts.
Based on the principal's discount factor $\dfp$ and the desired accuracy $\eps$, we first compute a cutoff time
\[
    T = O(\log(1 / (\eps \cdot (1 - \dfp))) / \log(1 / \dfp)).
\]
The idea is that the contribution to the overall utility after the first $T$ stages is at most $\frac{1}{1 - \dfp} \cdot \dfp^T \le \eps$.
After the $T$-th stage, we run a stationary policy that is optimal for the agent, which can be computed in polynomial time (through linear programming, or any other algorithm for computing optimal policies for standard infinite-horizon MDPs with discounted rewards).
Then we treat the first $T$ stages as a finite-horizon environment, and run Algorithm~\ref{alg:main} for this environment.

Since $T = O(\log(1 / \eps))$, this blows up the size of the problem at most by a $O(\log(1 / \eps))$ factor (as we make a copy of every state for every period).
Two aspects of how this finite-horizon version is set up deserve mention.  First, to match the infinite-horizon version, we have to discount the rewards in this finite-horizon version.  This is a straightforward modification: as every state in the finite-horizon version is already indexed by time, we can simply adjust the rewards for those time-indexed states by the appropriate discount factors.  Second, we still have to account for the discounted utility that the agent receives after $T$, as this may make it easier to satisfy the participation constraints before $T$.  To do so, we can simply add the total expected discounted utility after $T$ (from the agent-optimal stationary policy) as a single lump-sum reward to the final non-terminal state in the finite version.
The overall policy is then to run the output of Algorithm~\ref{alg:main} in the first $T$ stages, and to run the agent-optimal stationary policy after the $T$-th stage.

To see why this policy is only suboptimal by at most $\eps$, observe that the expected discounted principal utility that it obtains {\em from the first $T$ stages} is at least the expected discounted principal utility that the overall-optimal policy obtains from those stages.  This is because Algorithm~\ref{alg:main} explicitly optimizes for the first $T$ stages only, and the participation constraints it faces in these first $T$ stages cannot be tighter than those faced by the optimal policy, as the participation constraints for the finite-horizon version correspond to being as generous as possible to the agent after $T$.
Furthermore, the expected discounted principal utility that our algorithm obtains from the stages after $T$ can be at most $\eps$ lower than that for the optimal policy, by our choice of $T$.

\subsection{Proof of Theorem~\ref{thm:main}}
\label{sec:proof}

In this section, we present the proof of our main result (Theorem~\ref{thm:main}).
We start by proving several key technical lemmas.
We first prove the following lemma, which provides a tractable interpretation of evaluations along specific directions.

\begin{lemma}
\label{lem:evaluation}
    For any state $s \in \cS$ and direction $\alpha \in (\bR \times \bR_+)$,
    \[
        \max_{x \in \dom_s} \alpha \cdot (x, \pf_s(x)) = \max_{a \in \cA} \left(\alpha \cdot (\ra(s, a), \rp(s, a)) + \bE_{s' \sim \trans(s, a)}\left[\max_{x' \in \dom_{s'} \cap \bR_+} \alpha \cdot (x', \pf_{s'}(x'))\right]\right).
    \]
\end{lemma}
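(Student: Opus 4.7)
This is the one-step Bellman decomposition for the linear objective $\alpha \cdot (\ua, \up)$, so I would carry out a standard dynamic-programming derivation, being careful about the constraint sets. First, since $\alpha = (\alpha_1, \alpha_2)$ with $\alpha_2 \ge 0$, for each fixed $x$ the objective $\alpha \cdot (x, y)$ prefers the largest feasible $y$; combined with the definition of $\pf_s$ in Equation~\eqref{eqn:pareto-curve}, this lets me rewrite
\[
    \max_{x \in \dom_s} \alpha \cdot (x, \pf_s(x)) = \max_{\pi \in \Pi_s^F} \alpha \cdot \bigl(\ua^\pi(\emptyset, s),\ \up^\pi(\emptyset, s)\bigr),
\]
where $\Pi_s^F \subseteq \Pi$ denotes the set of policies that are feasible in the future at $s$, i.e., those satisfying $\ua^\pi(h', s') \ge 0$ for all $(h', s') \supseteq (\emptyset, s)$.

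Second, I would expand the objective using Equation~\eqref{eqn:u-p-pi} together with linearity of $\alpha \cdot (\cdot)$:
\[
    \alpha \cdot \bigl(\ua^\pi(\emptyset, s),\ \up^\pi(\emptyset, s)\bigr) = \bE_{a \sim \pi(\emptyset, s)}\Bigl[\alpha \cdot (\ra(s, a), \rp(s, a)) + \bE_{s' \sim \trans(s, a)}\bigl[\alpha \cdot (\ua^\pi((s, a), s'),\ \up^\pi((s, a), s'))\bigr]\Bigr].
\]
This expression is linear in the distribution $\pi(\emptyset, s)$, so some pure action $a \in \cA$ attains the maximum; this produces the outer $\max_{a \in \cA}$ on the RHS.

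Third, I would handle the inner expectation. A policy in $\Pi_s^F$ is specified by (i) an action $a$ played at $s$, together with (ii) for each successor state $s'$ reachable under $\trans(s, a)$, a continuation policy that is feasible in the future at $s'$ \emph{and} additionally satisfies $\ua \ge 0$ at $s'$ itself (the latter is the participation constraint at $s'$, which $\Pi_s^F$ does impose because $((s, a), s') \supseteq (\emptyset, s)$). Since the continuation policies at distinct successor states $s'$ act on disjoint extensions of the history, they can be optimized independently, so the maximum commutes with the expectation over $s'$. By the definition of $\pf_{s'}$ (again using $\alpha_2 \ge 0$ to justify replacing $\up^\pi((s, a), s')$ by $\pf_{s'}(\ua^\pi((s, a), s'))$ at the optimum), the inner per-$s'$ maximum equals
\[
    \max_{x' \in \dom_{s'} \cap \bR_+} \alpha \cdot (x', \pf_{s'}(x')),
\]
and assembling the three pieces yields the claimed identity.

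The main obstacle is the bookkeeping in the third step: I need to justify the swap of maximum and expectation and verify that the constraint set obtained after the swap is exactly $\dom_{s'} \cap \bR_+$. The $\bR_+$ factor comes from participation at $s'$, while the $\dom_{s'}$ factor comes from feasibility in the future at $s'$. Conversely, any pure action $a$ combined with independent continuation policies satisfying these two conditions at each successor $s'$ yields a policy in $\Pi_s^F$, so the two maximizations range over the same objects, establishing equality; everything else is linear-algebraic manipulation and Bellman expansion.
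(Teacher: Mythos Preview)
Your proposal is correct and takes essentially the same approach as the paper: both arguments establish the identity by a one-step Bellman decomposition, showing that the first action at $s$ may be taken deterministic and that the continuation policies at each successor $s'$ can be optimized independently subject to exactly the constraint $x' \in \dom_{s'} \cap \bR_+$. The paper presents the two directions (LHS $\ge$ RHS via explicit policy construction, LHS $\le$ RHS via policy decomposition) separately, whereas you bundle them into the single ``max commutes with expectation'' step and its converse, but the content is the same.
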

\begin{proof}
    We first show the left hand side is greater than or equal to the right hand side.
    Let $a^*$ and $x_{s'} \ge 0$ for each $s' > s$ be the maximizers on the right hand side.
    By the definition of $\pf_{s'}$, each $(x_{s'}, \pf_{s'}(x_{s'}))$ corresponds to a subpolicy $\pi^*_{s'}$ starting from state $s'$.
    We have
    \[
        (x_{s'}, \pf_{s'}(x_{s'})) = (\ua^{\pi^*_{s'}}(\emptyset, s'), \up^{\pi^*_{s'}}(\emptyset, s')).
    \]
    Moreover, for any $(h, s'') \supseteq (\emptyset, s')$,
        $\ua^{\pi^*_{s'}}(h, s'') \ge 0$.
    Now consider the policy $\pi$ defined such that $\pi(\emptyset, s) = a^*$, and for each $h = (s, a^*, s', a_2, s_3, \dots, s_t, a_t)$ and $s'' \in \cS$,
    \[
        \pi(h, s'') = \pi^*_{s'}((s', a_2, s_3, \dots, s_t, a_t), s'').
    \]
    That is, $\pi$ follows the recommendations of $\pi^*_{s'}$ whenever the first state reached after leaving $s$ is $s'$.
    For any unspecified history-state pair, $\pi$ always maximizes the agent's utility.
    It is easy to show that
    \[
        (\ua^\pi(\emptyset, s), \up^\pi(\emptyset, s)) = (\ra(s, a^*), \rp(s, a^*)) + \bE_{s' \sim \trans(s, a)}[(x_{s'}, \pf_{s'}(x_{s'}))].
    \]
    And moreover, because each $\pi^*_{s'}$ is feasible in the future and $x_{s'} \ge 0$, $\ua^\pi(h, s'') \ge 0$ for any $(h, s'') \supseteq (\emptyset, s)$.
    This means
    \begin{align*}
        \max_{x \in \dom_s} \alpha \cdot (x, \pf_s(x)) & \ge \alpha \cdot (\ua^\pi(\emptyset, s), \up^\pi(\emptyset, s)) \\
        & = \alpha \cdot (\ra(s, a^*), \rp(s, a^*)) + \alpha \cdot \bE_{s' \sim \trans(s, a)}[(x_{s'}, \pf_{s'}(x_{s'}))].
    \end{align*}

    Now consider the other direction.
    Let $x^*$ be the maximizer on the left hand side, and $\pi^*$ be the corresponding policy.
    Without loss of generality, $\pi^*(\emptyset, s) = a^*$ is deterministic (because otherwise we can simply choose the best action in the support).
    For each $s'$, let $\pi_{s'}$ be such that
    \[
        \pi_{s'}(h, s'') = \pi((s, a^*) + h, s'').
    \]
    That is, $\pi_{s'}$ is the subpolicy starting from $s'$ induced by $\pi^*$.
    Then because $\pi^*$ is feasible in the future, each $\pi_{s'}$ is also feasible in the future, and moreover, $\ra^{\pi_{s'}}(\emptyset, s') \ge 0$.
    So we have:
    \begin{align*}
        \alpha \cdot (x^*, \pf_s(x^*)) & = \alpha \cdot (\ra(s, a^*), \rp(s, a^*)) + \bE_{s' \sim \trans(s, a^*)}[\alpha \cdot (\ra^{\pi_{s'}}(\emptyset, s'), \rp^{\pi_{s'}}(\emptyset, s'))] \\
        & \le \max_{a \in \cA} \left(\alpha \cdot (\ra(s, a), \rp(s, a)) + \bE_{s' \sim \trans(s, a)}\left[\max_{x' \in \dom_{s'} \cap \bR_+} \alpha \cdot (x', \pf_{s'}(x'))\right]\right).
    \end{align*}
    This concludes the proof.
\end{proof}

The next lemma states that the denominators of the $x$ and $y$ coordinates returned by any evaluation of any Pareto curve throughout the algorithm are never too large, which is useful for upper bounding the number of iterations of binary search.

\begin{lemma}
\label{lem:precision}
    Consider all coordinates as fractions.
    Then we have: 
        (1) the least common denominator of the $x$-coordinates of the turning points on $\{\pf_s\}_{s \in \cS}$ is at most $2^{nL}$, and
        (2) the least common denominator of both the $x$-coordinates and the $y$-coordinates of the turning points on $\{\pf_s\}_{s \in \cS}$ is at most $2^{3n^2L}$.
\end{lemma}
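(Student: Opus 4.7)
The plan is to prove both parts by backward induction on the state index $s$, from $s = n$ down to $s = 1$. The central observation, used in both parts, is that any turning point $(x, y)$ of $\pf_s$ is a maximizer of $\alpha \cdot (\cdot, \pf_s(\cdot))$ on $\dom_s$ for some direction $\alpha$, and Lemma~\ref{lem:evaluation} then lets us write
\[
    (x, y) = (\ra(s, a), \rp(s, a)) + \sum_{s' > s} \trans(s, a, s') \cdot (x_{s'}, y_{s'})
\]
for some action $a$, where each $(x_{s'}, y_{s'})$ is the maximizer of $\alpha \cdot (\cdot, \pf_{s'}(\cdot))$ on $\dom_{s'} \cap \bR_+$. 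By concavity and piecewise linearity of $\pf_{s'}$, this maximizer is either itself a turning point of $\pf_{s'}$ (in particular whenever $\ua^-(s') \ge 0$), or else equals $(0, \pf_{s'}(0))$, the latter occurring only when the unrestricted maximizer has negative $x$-coordinate, so that $0$ lies strictly in the interior of $\dom_{s'}$.

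For part~(1), the $x$-coordinate is $x = \ra(s,a) + \sum_{s' > s} \trans(s,a,s')\, x_{s'}$, where each $x_{s'}$ is either $0$ or the $x$-coordinate of a turning point on a later $\pf_{s'}$. By induction, each $x_{s'}$ has denominator dividing $2^{(n-s-1)L}$; combined with the at-most-$2^L$ denominators of the rewards and transition probabilities, the common denominator can blow up by at most a factor of $2^L$ per state, yielding $2^{(n-s)L} \le 2^{nL}$. The base case $s = n$ is trivial since $\pf_n = \{(0, 0)\}$. This part is essentially bookkeeping.

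For part~(2), the novel step is bounding the denominator of $\pf_{s'}(0)$ whenever $(x_{s'}, y_{s'}) = (0, \pf_{s'}(0))$. Writing $\pf_{s'}(0) = (x_r y_\ell - x_\ell y_r)/(x_r - x_\ell)$ for the two adjacent turning points of $\pf_{s'}$ straddling $x = 0$, and clearing the common $x$-denominator $D_x \le 2^{nL}$ (from part~(1)), $\pf_{s'}(0)$'s denominator is at most the inductive turning-point $y$-denominator of $\pf_{s'}$ times $|a_r - a_\ell| \le 2n \cdot 2^{nL}$ (using $|x_\ell|, |x_r| \le n$). Including the additional factor of $2^L$ from the rewards and transitions, each inductive step multiplies the $y$-denominator by at most $2n \cdot 2^{(n+1)L} \le 2^{3nL}$ (for $n \ge 2$), and iterating over at most $n$ states produces the claimed bound $2^{3n^2 L}$.

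The main obstacle is the interpolation in part~(2): one must express $x_\ell$ and $x_r$ over their common denominator (from part~(1)) so that $|x_r - x_\ell|$ after clearing is an integer of bounded size, and verify that the interpolation contributes only once, additively in the exponent, per state. Corner cases such as $\ua^-(s') \ge 0$ (no interpolation needed) or $0$ itself being a turning point (the interpolation step is vacuous) trivialize without further work.
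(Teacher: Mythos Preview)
Your proposal is correct and follows essentially the same approach as the paper: backward induction on $s$, using Lemma~\ref{lem:evaluation} to write each turning point of $\pf_s$ as a reward-plus-transition-weighted combination of later turning points or values $\pf_{s'}(0)$, with the interpolation formula $\pf_{s'}(0) = (x_r y_\ell - x_\ell y_r)/(x_r - x_\ell)$ controlling the per-step blowup of the $y$-denominator. The only cosmetic difference is your base case $\pf_n = \{(0,0)\}$ (which is actually more consistent with the paper's setup that $\st$ has no actions) versus the paper's proof, which treats $\pf_{\st}$ as having turning points $(\ra(\st,a), \rp(\st,a))$; this yields a harmless off-by-one in the intermediate bounds but the same final estimates.
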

\begin{proof}
    We start by proving the first statement by mathematical induction.
    For $\st = n$, each turning point on $\pf_\st$ is $(\ra(\st, a), \rp(\st, a))$ for some $a \in \cA$, and since each $\ra(\st, a)$ has at most $L$ bits, $2^L$ is a denominator of the $x$-coordinate of each turning point.
    
    Now fix some $s < \st = n$ and suppose for any $s' > s$ and any turning point on $\pf_{s'}$, $2^{(n - s)L}$ is a denominator of the $x$-coordinate of that point.
    We argue that for any turning point on $\pf_s$, $2^{(n - s + 1)L}$ is a denominator of the $x$-coordinate of the point.
    Consider any turning point $(x, \pf_s(x))$.
    Observe that there is a direction $\alpha \in \bR \times \bR_+$ such that
    \[
        \alpha \cdot (x, \pf_s(x)) = \max_{x' \in \dom_s} \alpha \cdot (x', \pf_s(x')).
    \]
    So by Lemma~\ref{lem:evaluation}, there exists an action $a \in \cA$ and some $x_{s'} \in \dom_s \cap \bR_+$, such that
    \[
        x = \ra(s, a) + \sum_{s' > s} \trans(s, a, s') \cdot x_{s'}.
    \]
    Moreover, since $x_{s'}$ is a maximizer, without loss of generality, either $x_{s'} = 0$ or $x_{s'}$ is a turning point on $\pf_{s'}$.
    In both cases, by the induction hypothesis, $2^{(n - s)L}$ is a denominator of $x_{s'}$.
    Since $2^L$ is a denominator of both $\ra(s, a)$ and $\trans(s, a, s')$, $2^{(n - s + 1)L}$ must be a denominator of $x$.
    This establishes the first half of the lemma.
    
    Now consider the second statement.
    We inductively show that for any state $s \in \cS$, we can use $2^{3(n - s + 1)nL}$ to upper bound some common denominator of both coordinates of all points on $\pf_{s'}$, as well as $\pf_{s'}(0)$, for all $s' \ge s$.
    Given the first half of the lemma, we only need to argue about the $y$-coordinates.
    
    First consider $\st = n$.
    For the turning points, each $\rp(\st, a)$ has at most $L$ bits, and $2^L$ is a denominator.
    As for $\pf_\st(0)$, let $(x_-, y_-)$ and $(x_+, y_+)$ be the endpoints of the piece containing $(0, \pf_\st(0))$ on $\pf_\st$.
    Observe that
    \[
        \pf_\st(0) = \frac{y_- \cdot x_+ - y_+ \cdot x_-}{x_+ - x_-}.
    \]
    So the product of the denominator of $y_- \cdot x_+ - y_+ \cdot x_-$ and the numerator of $x_+ - x_-$ is a denominator of $\pf_\st(0)$.
    The former is at most $2^{2L}$, and the latter is at most $2 \times 2^L$, so something no larger than $2^{3L + 1} \le 2^{3nL}$ is a common denominator of all the $y$-coordinates.
    
    Now suppose for all $s' > s$, some $D \le 2^{3(n - s)nL}$ is a denominator of all the $y$-coordinates used to represent all $\pf_{s'}$ (including all turning points and $\pf_{s'}(0)$).
    We first argue that some $2^L \cdot D$ is a common denominator of all the $y$-coordinates used to represent all $\pf_{s'}$ for all $s' \ge s$, excluding $\pf_s(0)$ (we will handle $\pf_s(0)$ separately).
    Fix a turning point $(x, \pf_s(x))$, and again consider a direction $\alpha \in \bR \times \bR_+$ such that $(x, \pf_s(x))$ is a maximizer.
    By Lemma~\ref{lem:evaluation}, there exists $a \in \cA$ and $x_{s'} \in \dom_s \cap \bR_+$ such that
    \[
        \pf_s(x) = \rp(s, a) + \sum_{s' > s} \trans(s, a, s') \cdot \pf_{s'}(x_{s'}).
    \]
    And each $x_{s'}$ is either a turning point or $0$.
    By the induction hypothesis, $2^L \cdot D$ is a denominator of all $\pf_s(x)$ where $(x, \pf_s(x))$ is a turning point.
    Finally consider $\pf_s(0)$.
    Again, let $(x_-, y_-)$ and $(x_+, y_+)$ be the endpoints of the piece containing $(0, \pf_s(0))$ on $\pf_s$.
    Observe that
    \[
        \pf_s(0) = \frac{y_- \cdot x_+ - y_+ \cdot x_-}{x_+ - x_-}.
    \]
    So the product of the denominator of $y_- \cdot x_+ - y_+ \cdot x_-$ and the numerator of $x_+ - x_-$ is a denominator of $\pf_s(0)$.
    The former, as discussed above, is at most $2^{nL} \cdot 2^L \cdot D \le 2^{L + nL + 3(n - s)nL}$, and the latter is at most $2n \times 2^{nL} \le 2^{nL + n}$ (because the denominator of $x_+ - x_-$ is at most $2^{nL}$, and $x_+ - x_- \le 2n$), so there exists a number that is at most $2^{3(n - s)nL + 2nL + n + L} \le 2^{3(n - s + 1)nL}$ as a common denominator of all the $y$-coordinates that we care about.
    This finishes the proof of the lemma.
\end{proof}

Now we are ready to prove the correctness of Algorithm~\ref{alg:main}.

\begin{proof}[Proof of Theorem~\ref{thm:main}]
    As discussed in the overview in Section~\ref{sec:overview}, Algorithm~\ref{alg:main} runs in time $\mathrm{poly}(n, m, L)$.
    We focus on proving the correctness of  Algorithm~\ref{alg:main}.
    
    In particular, for each pair $(s, \alpha)$ reached in the execution of the algorithm, $(x_{s, \alpha}, y_{s, \alpha})$ satisfies
    \[
        \alpha \cdot (x_{s, \alpha}, y_{s, \alpha}) = \max_{x \in \dom_s \cap \bR_+} \alpha \cdot (x, \pf_s(x)).
    \]
    Moreover, for each $s \in \cS$, $y_s = \pf_s(0)$.
    The claim regarding $(x_{s, \alpha}, y_{s, \alpha})$ can be proved inductively.
    In particular, for those points computed in the inner loop (lines~7~and~9) where $s' > s$, the property of $(x_{s', \alpha}, y_{s', \alpha})$ follows from the same property of each $(x_{s'', \alpha}, y_{s'', \alpha})$ and Lemma~\ref{lem:evaluation}.
    As for $(x_{s, \alpha}, y_{s, \alpha})$, the only difference is that when it is first computed in line~7, it is possible that $x_{s, \alpha} < 0$.
    However, this is fixed in line~15 given that $y_s = \pf_s(0)$.

    To show $y_s = \pf_s(0)$, we only need to show that the binary search is accurate enough.
    In particular, $(x_{s, -}, y_{s, -})$ and $(x_{s, +}, y_{s, +})$ are in fact the two endpoints of the piece containing $(0, \pf_s(0))$.
    Suppose that this is not the case.
    That is, without loss of generality, suppose there exists a turning point $(x, y)$ to the right of $(x_{s, -}, y_{s, -})$ where $x \le 0$.
    Let $\alpha = (t, 1)$ be a direction for which $(x, y)$ is the maximizer.
    It must be the case that $\alpha_{s, +}$ is to the right of $\alpha$, which is to the right of $\alpha_{s, -}$.
    In other words, at line~11, it must be the case that $\ell < t < r$.
    Consider the slopes of the piece containing $(0, \pf_s(0))$, and the piece immediately to the left of that piece, and let $k_1$ and $k_2$ be the two slopes respectively where $k_1 > k_2$.
    We must have $-r \le k_2 \le -t \le k_1 \le -\ell$, which in particular implies that $r - \ell \ge k_1 - k_2$.
    Now by Lemma~\ref{lem:precision}, the least common denominators of the two coordinates of all turning points are at most $2^{nL}$ and $2^{3n^2L}$ respectively.
    Moreover, all $x$-coordinates are between $-n$ and $n$.
    So, the minimum possible difference between the slopes of two consecutive pieces is at least $1 / (2n \cdot 2^{nL} \cdot 2^{3n^2L}) \ge 2^{-5n^2L}$.
    This means $r - \ell \ge k_1 - k_2 \ge 2^{-5n^2L}$, which contradicts the stopping criterion of the binary search (line~3).

    One final concern is that the initial $r$ (line~2) may not be large enough.
    But this is impossible, because the smallest slope (which is negative) that we need to consider is $-2n \cdot 2^{nL} > -2^{3nL}$, so the initial $r = 2^{3nL}$ is in fact large enough.
\end{proof}

\section{Future Research}

Throughout, we have considered a setting where the only decision the agent is able to make is to quit, and the decision to quit is irreversible.  As we argued at the outset, the case where the agent only decides whether to {\em enter} (and this decision is irreversible) leads to the same problem.
However, we could consider richer models where an agent is able to quit, but then has an opportunity to re-enter at certain later times, under certain conditions.

We have also assumed throughout that the agent has no private information.
If the agent has private information, for example about how the agent values different outcomes, we arrive in a dynamic mechanism design context.   As mentioned earlier, in general, in this context we face NP-hardness results~\citep{papadimitriou2016complexity,zhang2021automated}.  Still, we may ask whether the techniques developed in this paper can be generalized to that context, perhaps resulting in polynomial-time algorithms for special cases to which the NP-hardness results do not apply.

One aspect of our approximation of the discounted infinite-horizon case is that it explicitly optimizes only for the first $T$ rounds, and consequently, it might, for example, unsustainably use up all the world's resources by round $T$.  Formally, this is not a problem because, due to the nature of exponential discounting, the remaining rounds are simply not worth much.  Still, one may wonder whether this fails to value long-term sustainability appropriately.  Of course, this issue is not at all unique to our specific setting, but rather a fundamental aspect of exponential discounting.


\bibliographystyle{plainnat}
\bibliography{ref}

\appendix

\end{document}